\theoremstyle{plain}
\newtheorem{thm}{Theorem}[section]
\newtheorem{cor}[thm]{Corollary}
\newtheorem{prop}[thm]{Proposition}
\newtheorem{rem}[thm]{Remark}
\theoremstyle{definition}
\theoremstyle{remark}
\numberwithin{equation}{section}
 \DeclareMathOperator{\IM}{Im}
\begin{document}
\title[Self-dual-anti-self-dual solutions of discrete Yang-Mills equations]{Self-dual and anti-self-dual solutions of discrete Yang-Mills equations  on a double complex}
\author{Volodymyr Sushch}

\address{Department of Mathematics, Koszalin University of Technology, Sniadeckich 2,
 75-453 Koszalin, Poland;  Pidstrygach Institute for Applied Problems of Mechanics and
 Mathematics NASU,  Lviv, Ukraine }

\email{volodymyr.sushch@tu.koszalin.pl}

\keywords{Yang-Mills equations, self-dual and anti-self-dual
equations, instantons and anti-instantons, difference equations}

\subjclass[2000]{81T13, 39A12}
\date{September 16, 2009}
\maketitle
\begin{abstract}
 We study a  discrete model of the $SU(2)$
 Yang-Mills equations on a combinatorial analog
 of $\Bbb{R}^4$.
 Self-dual and anti-self-dual solutions of discrete Yang-Mills equations are constructed.
 To obtain these solutions we use both techniques of a double complex and the quaternionic approach.
  Interesting analogies between instanton, anti-instanton solutions of discrete and continual self-dual,
  anti-self-dual equations are also discussed.
\end{abstract}

\section{Introduction}

It is well known that the self-dual and anti-self-dual connections
are the absolute minima of the Lagrangian for a $4$-dimensional
non-abelian gauge theory. The first self-dual solution - the one
instanton - to the $SU(2)$ Yang-Mills equations on $\Bbb{R}^4$ was
obtained by Belavin et al \cite{Belavin}. Later other more general
multi-instanton solutions were described in \cite{CF, Jackiw}. Since
then numerous extensions have been made.  Classical references are
the books by Atiyah \cite{Atiyah}, Freed and Uhlenbeck \cite{FU}.

In this paper we study a  discrete analog  of the $SU(2)$
 Yang-Mills equations on a combinatorial analog
 of $\Bbb{R}^4$. The ideas presented here are strongly influenced by book of Dezin \cite{Dezin}.
 We develop discrete models of some objects in differential geometry,
 including the Hodge star operator, the differential and the covariant exterior differential operator,
  in such a way that they preserve the geometric structure of their continual
  analogs.
 We continue the investigations which were originated in
 \cite{Dezin1, S1, S2, S3}.  The purpose of this paper is to construct the
 self-dual and anti-self-dual solutions of discrete $SU(2)$
 Yang-Mills equations which imitate the corresponding solutions of
  continual theory. The geometrical discretisation techniques used here extend
 those introduced in \cite{Dezin} and \cite{S1}. A  combinatorial model of
$\Bbb{R}^4$ based on the use of the double complex construction is
taken from \cite{S3}.

There are many other approaches to the discretisation of Yang-Mills
theories. Numerous papers have been written on this subject. See,
for example, \cite{SS, Cast, FJ, GM,  KN, N, Nish, Shab, Se} and the references therein.
Most of them are based on the lattice discretisation scheme.
However, in the case of the lattice formulation there are
difficulties in keeping geometrical properties of an origin gauge
theory. An alternative geometrical discretisation scheme of a field
theory can be found in \cite{SSSA}.

 The paper is organized as follows. In Section 2 we review
some basic facts of the $SU(2)$ Yang-Mills theory on $\Bbb{R}^4$. We
begin by recalling the connection between the  Lie group $SU(2)$ and
the space of quaternions. Finally, we write down  the basic
instanton and anti-instanton solutions in quaternionic form. The
notations here are compiled from \cite{Atiyah} and \cite{NS}.

Section 3 contains a brief summary of definitions and properties due
to the double complex construction. We repeat here the relevant
material from \cite{S3}. This article is also the main reference for
this section. In particular, we introduce  discrete matrix-valued
forms (analog of differential forms) and define analogs of the main
continual operations on them.

In Section 4 using the quaternionic approach we present the discrete
Yang-Mills equations. We write out components of the discrete
curvature 2-form in quaternionic form. The discrete self-dual and
anti-self-dual equations are described.  We try to be as close to
continual $SU(2)$ Yang-Mills theory as possible.  Hence we discuss
conditions when the discrete curvature will be $su(2)$-valued.

Finally, Section 5 is devoted to self-dual and anti-self-dual
solutions of the discrete Yang-Mills equations. We construct these
solutions as discrete quaternionic 1-forms and discuss some
analogies with continual instanton and anti-instanton solutions.

\section{Quaternions and $SU(2)$-connection}

In this section we briefly recall some well known settings of the
smooth Yang-Mills theory in Euclidean 4-dimensional space (see, for
example, \cite{NS}).

We  begin with a brief review of some preliminaries about
quaternions.   The quaternions are formed from real numbers by
adjoining three symbols $\mathbf{i}, \mathbf{j}, \mathbf{k}$ and an
arbitrary quaternion $x$ can be written as
\begin{equation}\label{2.1}
x=x_{1}+x_{2}\mathbf{i}+x_{3}\mathbf{j}+x_{4}\mathbf{k},
\end{equation} where
$x_{1}, x_{2}, x_{3}, x_{4}\in \mathbb{R}$. The  symbols
$\mathbf{i}, \mathbf{j}, \mathbf{k}$ satisfy the following
identities
\begin{align}\label{2.2}
&\mathbf{i}^2=\mathbf{j}^2=\mathbf{k}^2=-1,\notag \\
&\mathbf{i}\mathbf{j}=-\mathbf{j}\mathbf{i}=\mathbf{k}, \quad
\mathbf{j}\mathbf{k}=-\mathbf{k}\mathbf{j}=\mathbf{i}, \quad
\mathbf{k}\mathbf{i}=-\mathbf{i}\mathbf{k}=\mathbf{j}.
\end{align}
It is clear that the space of quaternions is isomorphic to
$\mathbb{R}^4$. By analogy with the complex numbers $x_{1}$ is
called the real part of $x$ and
$x_{2}\mathbf{i}+x_{3}\mathbf{j}+x_{4}\mathbf{k}$ is called the
imaginary part. In further we will write
$$\IM{x}=x_{2}\mathbf{i}+x_{3}\mathbf{j}+x_{4}\mathbf{k}.$$
The conjugate quaternion of $x$ is defined by
$$\bar{x}=x_{1}-x_{2}\mathbf{i}-x_{3}\mathbf{j}-x_{4}\mathbf{k}.$$
Then  the norm $|x|$ of a quaternion can be introduced as follows
\begin{equation}\label{2.3}
|x|^2=x\bar{x}=x_{1}^2+x_{2}^2+x_{3}^2+x_{4}^2.
\end{equation}
If $x\neq0$, then it has a unique inverse $x^{-1}$ given by
\begin{equation}\label{2.4}
x^{-1}=\bar{x}/|x|^2.
\end{equation}
The algebra of quaternions can be represented as a sub-algebra of
the $2\times2$ complex matrices $M(2, \mathbb{C})$. We identify the
quaternion (\ref{2.1}) with a matrix $f(x)\in M(2, \mathbb{C})$ by
setting
\begin{equation}\label{2.5}
f(x)=\left(
       \begin{array}{cc}
         x_1+x_2i & x_3+x_4i \\
         -x_3+x_4i & x_1-x_2i \\
       \end{array}
     \right).
\end{equation}
Here $i$ is the imaginary unit.

 It is well known that the unit
quaternions, i.e., they have norm $|x|=1$, form a group and this group
is isomorphic to $SU(2)$.  The following  $2\times2$  complex
matrices
\begin{equation}\label{2.6}
\mathbf{i}=\left(
              \begin{array}{cc}
                i & 0 \\
                0 & -i \\
              \end{array}
            \right), \quad \mathbf{j}=\left(
                             \begin{array}{cc}
                               0 & 1 \\
                               -1 & 0 \\
                             \end{array}
                           \right), \quad \mathbf{k}=\left(
                                                       \begin{array}{cc}
                                                         0 & i \\
                                                         i & 0 \\
                                                       \end{array}
                                                     \right)
\end{equation}
realize a representation of the Lie algebra $su(2)$ of the  group
$SU(2)$. Note that multiplying by $-i$ these tree matrices we obtain
the standard Pauli matrices. Matrices (\ref{2.6}) correspond to the
units $\mathbf{i}, \mathbf{j}, \mathbf{k}$ given by (\ref{2.2}).
Thus the Lie algebra $su(2)$ can be viewed as the pure imaginary
quaternions with basis $\mathbf{i}, \mathbf{j}, \mathbf{k}$.

Let now $A$ be an $SU(2)$-connection. This means that $A$ is an
$su(2)$-valued 1-form and we can write
\begin{equation}\label{2.7}
A=\sum_{\mu}A_{\mu}(x) dx^\mu,
\end{equation}
where $A_{\mu}(x)\in su(2)$ and  $x=(x_1, ...,  x_4)$ is a point of
$\mathbb{R}^4$. The connection $A$  is  also called a gauge
potential. Define a gauge transformation by a function $g(x)$ taking
value in $SU(2)$. Then the gauge potential $A$ must transform like
\begin{equation} \label{2.8}
A \rightarrow  g^{-1}Ag+g^{-1}dg.
 \end{equation}

Let us define the curvature 2-form $F$ by
\begin{equation} \label{2.9}
 F=dA+A\wedge A,
 \end{equation}
 where $\wedge$ denotes the exterior multiplication.

Consider also  the covariant exterior differential operator $d_A$ given by
\begin{equation} \label{2.10}
d_A\Omega=d\Omega+A\wedge \Omega+(-1)^{p+1}\Omega\wedge A,
\end{equation} where $\Omega$ is a $su(2)$-valued $p$-form.

The Yang-Mills action S can be
expressed in terms of the 2-forms $F$ and $\ast F$ as
\begin{equation} \label{2.11}
 S=-tr\int_{\mathbb{R}^{4}}F\wedge\ast F,
 \end{equation}
 where $\ast$ is the Hodge star operator. In $\mathbb{R}^{4}$ the operator $\ast^2$ is either an involution or
anti-involution, i.e.,  $\ast^2=\pm1$.
 The Yang-Mills Lagrangian $L=-tr(F\wedge\ast F)$ is invariant under the
 gauge transformation (\ref{2.8}).
 By the physical requirement it is clear that the action $S$ should be finite. Hence the curvature $F$ should be square integrable.
 This means that $F\rightarrow 0$ as $|x|\rightarrow \infty$. Consequently, we must describe the boundary condition at infinity for the connection $A$. By virtue of gauge freedom  (\ref{2.8}) we have
 \begin{equation} \label{2.12}
A \sim  g^{-1}dg \qquad  \mbox{as} \qquad  |x|\rightarrow \infty,
 \end{equation}
 where $\sim$ implies asymptotic behaviour. Here and subsequently we do not specify the rate of decay.

 Written in terms of the covariant exterior differential operator $d_A$
 the Euler-Lagrange  equations for the extrema of (\ref{2.11}) have the form
\begin{equation} \label{2.13}
d_A F=0, \qquad d_A\ast F=0.
\end{equation}
These equations are the Yang-Mills equations.  The first equation of (\ref{2.13}) is known also as the Bianchi identity.
In $4$-dimensional Yang-Mills
 theories the following equations
 \begin{equation} \label{2.14}
 F=\ast F, \qquad F=-\ast F
 \end{equation}
 are called self-dual and anti-self-dual respectively.
 These equations  are first-order non-linear equations for the potential $A$ which
imply the second-order Yang-Mills equations (\ref{2.13}).
 Solutions of (\ref{2.14}) -- the self-dual and anti-self-dual
 connections -- are called also instantons and anti-instantons \cite{FU}.
 It is known that the self-dual and anti-self-dual connections are the
 absolute minima of the action $S$.

The connection 1-form $A$ can be
defined also  as taking values in the space of pure imaginary
quaternions. To express $A$ in quaternion form we consider the quaternion differential
$$dx=dx_{1}+dx_{2}\mathbf{i}+dx_{3}\mathbf{j}+dx_{4}\mathbf{k}$$
and the conjugate quaternion of $dx$
$$d\bar{x}=dx_{1}-dx_{2}\mathbf{i}-dx_{3}\mathbf{j}-dx_{4}\mathbf{k}.$$
Let $f(x)$ be a function of the quaternion variable $x$   with quaternion values. Then we can write $A$ as
\begin{equation}\label{2.16}
A=\IM(f(x)dx),
\end{equation}
where
\begin{equation*}
f(x)=f_{1}(x)+f_{2}(x)\mathbf{i}+f_{3}(x)\mathbf{j}+f_{4}(x)\mathbf{k}.
\end{equation*}
 Using the rules of multiplication (\ref{2.2}) we have
 \begin{align*}
 &A_1(x)=f_{2}(x)\mathbf{i}+f_{3}(x)\mathbf{j}+f_{4}(x)\mathbf{k},\\
 &A_2(x)=f_{1}(x)\mathbf{i}+f_{4}(x)\mathbf{j}-f_{3}(x)\mathbf{k},\\
 &A_3(x)=-f_{4}(x)\mathbf{i}+f_{1}(x)\mathbf{j}+f_{2}(x)\mathbf{k},\\
 &A_4(x)=f_{3}(x)\mathbf{i}-f_{2}(x)\mathbf{j}+f_{1}(x)\mathbf{k}.
 \end{align*}
Using (\ref{2.16}) we can rewrite (\ref{2.9}) as follows
 \begin{equation}\label{2.17}
F=\IM(df(x)\wedge dx+f(x)dx\wedge f(x)dx).
\end{equation}
Note that  calculation of the imaginary part of $f(x)dx$ and
computing its curvature commute.

 Let us take the following expression for $f(x)$:
 \begin{equation}\label{2.18}
 f(x)=\frac{\bar{x}}{1+|x|^{2}}.
 \end{equation}
 Then the connection 1-form $A$ is  defined by
 \begin{equation}\label{2.19}
A=\IM\Big\{\frac{\bar{x}dx}{1+|x|^{2}}\Big\}.
\end{equation}
The explicit components $A_{\mu}$ can be written as
\begin{align}\label{2.20}
A_1(x)=\frac{-x_{2}\mathbf{i}-x_{3}\mathbf{j}-x_{4}\mathbf{k}}{1+|x|^{2}},  \qquad
A_2(x)=\frac{x_{1}\mathbf{i}-x_{4}\mathbf{j}+x_{3}\mathbf{k}}{1+|x|^{2}}, \notag \\
A_3(x)=\frac{x_{4}\mathbf{i}+x_{1}\mathbf{j}-x_{2}\mathbf{k}}{1+|x|^{2}}, \qquad
A_4(x)=\frac{-x_{3}\mathbf{i}+x_{2}\mathbf{j}+x_{1}\mathbf{k}}{1+|x|^{2}}.
\end{align}
 Putting (\ref{2.18}) in (\ref{2.17}) we get the pure imaginary expression
 \begin{equation}\label{2.21}
F=\frac{d\bar{x}\wedge dx}{(1+|x|^{2})^2}.
\end{equation}
It is easy  to show that the 2-form $d\bar{x}\wedge dx$ is
anti-self-dual. Hence $F$ is anti-self-dual too and the connection
(\ref{2.19}) describes an anti-instanton . See for details
\cite{Atiyah}.

Similarly, if we take
\begin{equation}\label{2.22}
A=\IM\Big\{\frac{xd\bar{x}}{1+|x|^{2}}\Big\},
\end{equation}
then we obtain the self-dual 2-form
\begin{equation}\label{2.23}
F=\frac{dx\wedge d\overline{x}}{(1+|x|^{2})^2}.
\end{equation}
Thus the curvature is self-dual  and (\ref{2.22}) describes an
instanton .

 \section {Double complex}
 We will need the double complex construction described in \cite{S3}.
 In with section for the convenience of the  reader we repeat the relevant material
  from \cite{S3} without proofs, thus making our presentation  self-contained.

 Let the tensor product  $C(4)=C\otimes C\otimes C\otimes C$
of an  1-dimensional complex $C$ be a combinatorial model of
Euclidean space $\Bbb{R}^4$ (see for details also \cite{Dezin}).
The 1-dimensional complex $C$ is defined in the following way. Let
$C^0$ denotes the real linear space of 0-dimensional chains
generated by basis elements $x_j$ (points), $j\in
\Bbb{Z}$. It is convenient to introduce the shift operators
$\tau,\sigma$ in the set of indices by
\begin{equation} \label{3.1} \tau j=j+1,
\qquad \sigma j=j-1.
\end{equation}
We denote the open interval $(x_j, \ x_{\tau j})$ by $e_j$. We'll
regards the set $\{e_j\}$ as a set of basis elements of the real
linear space $C^1$  of 1-dimensional chains. Then the 1-dimensional
complex (combinatorial real line) is the direct sum of the
introduced spaces $C=C^0\oplus C^1$. The boundary operator
$\partial$ on the basis elements of $C$ is given by
 \begin{equation} \label{3.2}
 \partial x_j=0, \qquad  \partial
e_j=x_{\tau j}-x_j.
 \end{equation}
 The definition is extended to arbitrary chains by linearity.

Multiplying the basis elements $x_j, e_j$ in various ways we obtain
basis elements of $C(4)$. Let $s_k^{(p)}$, where
$k=(k_1,k_2,k_3,k_4)$ and  $k_i\in\Bbb Z,$ be an arbitrary basis
element of $C(4)$.  Then a $p$-dimensional chain is given by
\begin{equation}\label{3.3}
c_p=\sum_k\sum_p c^k_{(p)}s_k^{(p)}, \quad c^k_{(p)}\in\Bbb R.
\end{equation}
 We suppose that the superscript $(p)$ contains
the whole requisite information about the quantity and places of
1-dimensional elements  $e_j$  in $s_k^{(p)}$.
 For example, the 1-dimensional basis elements $e_k^i$
of $C(4)$ can be written as
\begin{align}\label{3.4}\notag
 e_k^1&= e_{k_1}\otimes
x_{k_2}\otimes x_{k_3}\otimes x_{k_4}, \qquad e_k^2=
x_{k_1}\otimes e_{k_2}\otimes x_{k_3}\otimes x_{k_4}, \\ e_k^3&=
x_{k_1}\otimes x_{k_2}\otimes e_{k_3}\otimes x_{k_4}, \qquad
e_k^4=x_{k_1}\otimes x_{k_2}\otimes x_{k_3}\otimes e_{k_4}
\end{align}
 and for the 2-dimensional basis elements
$\varepsilon_k^{ij}$ we have
\begin{align}\label{3.5}\notag
\varepsilon_k^{12}&=e_{k_1}\otimes e_{k_2}\otimes x_{k_3}\otimes
e_{k_4}, \qquad \varepsilon_k^{23}=x_{k_1}\otimes e_{k_2}\otimes
e_{k_3}\otimes x_{k_4}, \\ \notag
\varepsilon_k^{13}&=e_{k_1}\otimes x_{k_2}\otimes e_{k_3}\otimes
x_{k_4}, \qquad \varepsilon_k^{24}=x_{k_1}\otimes e_{k_2}\otimes
x_{k_3}\otimes e_{k_4},
 \\ \varepsilon_k^{14}&=e_{k_1}\otimes x_{k_2}\otimes
x_{k_3}\otimes e_{k_4}, \qquad \varepsilon_k^{34}=x_{k_1}\otimes
x_{k_2}\otimes e_{k_3}\otimes e_{k_4}.
\end{align}
Using (\ref{3.2}) we define the boundary operator
$\partial$ on chains of $C(4)$ in the following way: if $c_p, \
c_q$ are chains of the indicated dimension, belonging to the
complexes being multiplied, then
\begin{equation}\label{3.6}
\partial(c_p\otimes c_q)=\partial c_p\otimes c_q+(-1)^pc_p\otimes\partial c_q.
\end{equation}
For example, for the basis element $\varepsilon_k^{24}$ we have
\begin{align*}
\partial\varepsilon_k^{24}&=\partial(x_{k_1}\otimes e_{k_2})\otimes
x_{k_3}\otimes e_{k_4}-x_{k_1}\otimes e_{k_2}\otimes
\partial(x_{k_3}\otimes e_{k_4}) \\ &=\partial x_{k_1}\otimes
e_{k_2}\otimes x_{k_3}\otimes e_{k_4}+x_{k_1}\otimes \partial
e_{k_2}\otimes x_{k_3}\otimes e_{k_4}\\ &- x_{k_1}\otimes
e_{k_2}\otimes \partial x_{k_3}\otimes e_{k_4}-x_{k_1}\otimes
e_{k_2}\otimes x_{k_3}\otimes \partial e_{k_4} \\ &=
x_{k_1}\otimes  x_{\tau k_2}\otimes x_{k_3}\otimes
e_{k_4}-x_{k_1}\otimes  x_{k_2}\otimes x_{k_3}\otimes e_{k_4}\\
&-x_{k_1}\otimes x_{k_2}\otimes x_{k_3}\otimes x_{\tau
k_4}+x_{k_1}\otimes x_{k_2}\otimes x_{k_3}\otimes x_{k_4}.
\end{align*}

For convenience we also  introduce the shift operators $\tau_i$ and
$\sigma_i$ which act in the set of indices $k=(k_1,k_2,k_3,k_4), \
k_i\in\Bbb Z,$ as
\begin{equation}\label{3.7}
\tau_ik=(k_1,...\tau
 k_i,...k_4), \qquad
 \sigma_ik=(k_1,...\sigma k_i,...k_4),
\end{equation}
where $\tau$ and $\sigma$ are given by (\ref{3.1}).

Let us introduce the construction of a double complex. Together with
the complex $C(4)$ we consider its double, namely the complex
$\tilde{C}(4)$ of exactly the same structure. Define the one-to-one
correspondence
\begin{equation}\label{3.8}
\ast : C(4)\rightarrow\tilde{C}(4), \qquad \ast : \tilde
C(4)\rightarrow C(4)
\end{equation}
in the following way. Let  $s_k^{(p)}$ be an arbitrary
$p$-dimensional basis element of $C(4)$, i.e.,  the product
$s_k^{(p)}=s_{k_1}\otimes s_{k_2}\otimes s_{k_3}\otimes s_{k_4}$
contains exactly $p$ of $1$-dimensional elements $e_{k_i}$ and $4-p$
of $0$-dimensional elements  $x_{k_i}$, $p=0,1,2,3,4$, \ $k_i\in\Bbb
Z.$ Then
\begin{equation}\label{3.9}
\ast : s_k^{(p)}\rightarrow\pm\tilde s_k^{(4-p)}, \qquad \ast :
\tilde s_k^{(4-p)}\rightarrow \pm s_k^{(p)},
\end{equation}
where
\begin{equation*}
 \tilde s_k^{(4-p)}=*s_{k_1}\otimes *s_{k_2}\otimes
*s_{k_3}\otimes *s_{k_4}
\end{equation*}
and $*s_{k_i}=\tilde e_{k_i}$ if $s_{k_i}=x_{k_i}$ and
$*s_{k_i}=\tilde x_{k_i}$ if $s_{k_i}=e_{k_i}.$ In the first of
mapping  (\ref{3.9}) we take "$+$" if the permutation $((p), \
(4-p))$ of $(1,2,3,4)$ is even and "$-$" if the permutation $((p), \
(4-p))$ is odd. Recall that in symbol $(p)$ the number of basis
element is contained. For example, for the 2-dimensional basis
element $\varepsilon_k^{13}=e_{k_1}\otimes x_{k_2}\otimes
e_{k_3}\otimes x_{k_4}$ we have
 $\ast\varepsilon_k^{13}=-\tilde\varepsilon_k^{24}$ since the
 permutation $(1,3,2,4)$ is odd. The mapping $\ast :
\tilde s_k^{(4-p)}\rightarrow \pm s_k^{(p)}$ is defined by
analogy.
\begin{prop}Let $c_r\in C(4)$ be an $r$-dimensional chain (\ref{3.3}).
Then we have
\begin{equation}\label{3.10}
\ast\ast c_r=(-1)^{r(4-r)}c_r.
\end{equation}
\end{prop}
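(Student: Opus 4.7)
The plan is to reduce by linearity to a single basis element $s_k^{(r)}$ of $C(4)$ and to track the sign through two successive applications of $\ast$. By the definition of $\ast$ given just above the proposition, the first application yields $\ast s_k^{(r)} = \eta_1 \tilde s_k^{(4-r)}$, where $\eta_1 = \pm 1$ is the sign of the permutation $\pi = ((r),(4-r))$ of $(1,2,3,4)$ formed by listing the positions of the one-dimensional factors $e_{k_i}$ in $s_k^{(r)}$ first and the positions of the zero-dimensional factors $x_{k_i}$ second. A second application gives $\ast \tilde s_k^{(4-r)} = \eta_2 s_k^{(r)}$; since in $\tilde s_k^{(4-r)}$ the one-dimensional (tilde) factors sit at precisely the indices making up the $(4-r)$-block of $\pi$, the permutation relevant to $\eta_2$ is now the reverse block order $((4-r),(r))$.

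The combinatorial heart of the argument is that $((4-r),(r))$ is obtained from $((r),(4-r))$ by interchanging two adjacent blocks of sizes $r$ and $4-r$, which is the product of $r(4-r)$ adjacent transpositions. Hence $\eta_2 = (-1)^{r(4-r)}\eta_1$, and so
\[
\ast\ast s_k^{(r)} \;=\; \eta_1\eta_2\, s_k^{(r)} \;=\; \eta_1^2(-1)^{r(4-r)} s_k^{(r)} \;=\; (-1)^{r(4-r)} s_k^{(r)}.
\]
Multiplying by the coefficients $c^k_{(r)}$ and summing over $k$ in (\ref{3.3}) extends this identity to an arbitrary chain $c_r \in C(4)$; the case $c_r \in \tilde C(4)$ is identical by the built-in symmetry of the star correspondence between $C(4)$ and $\tilde C(4)$.

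The main obstacle is the bookkeeping for $\eta_2$: one must recognize that the defining permutation for $\ast$ on $\tilde C(4)$ still places the positions of the one-dimensional (now tilde) factors first, so on $\tilde s_k^{(4-r)}$ the block $(4-r)$ necessarily precedes the block $(r)$, and the two signs multiply to a block-swap sign rather than a square. I would sanity-check this on the two cases already singled out in the paper, namely $\varepsilon_k^{13}$ (with $r=2$, giving $\eta_1=\eta_2=-1$ and $(-1)^{r(4-r)}=+1$) and $e_k^1$ (with $r=1$, giving $\eta_1=+1$, $\eta_2=-1$, matching $(-1)^{r(4-r)}=-1$). Once this identification is in hand the rest is a direct sign count.
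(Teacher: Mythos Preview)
Your argument is correct. The paper itself does not prove this proposition in the text; it simply writes ``See \cite{S3}'' and defers to the earlier article where the double complex was introduced. So there is no in-paper proof to compare against, and your proposal supplies a self-contained argument where the paper gives none.

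The content of your proof is sound: the key observation is that the second application of $\ast$ reads the permutation with the blocks reversed, $((4-r),(r))$ rather than $((r),(4-r))$, because the one-dimensional tilde factors in $\tilde s_k^{(4-r)}$ sit exactly at the former $x$-positions of $s_k^{(r)}$. The sign discrepancy between the two block orderings is indeed $(-1)^{r(4-r)}$, and your sanity checks on $\varepsilon_k^{13}$ and $e_k^1$ are consistent with the paper's explicit example $\ast\varepsilon_k^{13}=-\tilde\varepsilon_k^{24}$. The extension by linearity and the remark about the symmetric case $\tilde C(4)\to C(4)$ are both fine.
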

\begin{proof} See \cite{S3}.
\end{proof}

Now we  consider a dual object of the  complex  $C(4)$. Let $K(4)$
be a cochain complex with  $gl(2,\Bbb{C})$-valued coefficients,
where  $gl(2,\Bbb{C})$ is the Lie algebra of the group
$GL(2,\Bbb{C})$. Recall that  $gl(2,\Bbb{C})$ consists of all
complex $2\times2$ matrices $M(2,\Bbb{C})$ with bracket operation
$[\cdot , \cdot]$. We suppose that the complex $K(4)$, which is a
conjugate of $C(4)$, has a similar structure: ${K(4)=K\otimes
K\otimes K\otimes K}$, where $K$ is a conjugate of the 1-dimensional
complex $C$. Basis elements of $K$ can be written as ${x^j}, \
{e^j}$. Then an arbitrary basis element of $K(4)$ is given by ${s^k_{(p)}=
s^{k_1}\otimes s^{k_2}\otimes s^{k_3}\otimes s^{k_4}}$, where
$s^{k_j}$ is either $x^{k_j}$ or $e^{k_j}$. For example, we denote
the 1-, 2-dimensional basis elements of $K(4)$ by $e_i^k$, \
$\varepsilon^k_{ij}$ respectively, cf. (\ref{3.4}), (\ref{3.5}). For
a $p$-dimensional cochain $\varphi\in K(4)$ we have
\begin{equation}\label{3.11}
\varphi= \sum_k \sum_p\varphi_k^{(p)}s_{(p)}^k,
 \end{equation}
where $\varphi_k^{(p)}\in gl(2,\Bbb{C})$.
We will
call cochains forms, emphasizing their relationship with the
corresponding continual objects, differential forms.

 We define the pairing operation $<\cdot \ , \ \cdot>$ for arbitrary
basis elements \ $\varepsilon_k\in C(4)$, \ $s^k\in K(4)$ by the
rule
\begin{equation}\label{3.12}
 <\varepsilon_k, as^k>=\left\{\begin{array}{l}0,\
\varepsilon_k\ne s_k\\
                            a,\ \varepsilon_k=s_k,\ a\in gl(2,\Bbb{C}).
                            \end{array}\right.
\end{equation}
Here for simplicity  the superscript $(p)$ is omitted.
 The operation (\ref{3.12}) is linearly extended to cochains.

The operation $\partial$ (\ref{3.6}) induces the dual operation
$d^c$ on $K(4)$ in the following way:
\begin{equation}\label{3.13}
<\partial\varepsilon_k, as^k>=<\varepsilon_k, ad^cs^k>.
\end{equation}
For example, if $\varphi= \sum_k \varphi_kx^k$, where $x^k=x^{k_1}\otimes x^{k_2}\otimes x^{k_3}\otimes x^{k_4}$, is a 0-form, then
\begin{equation}\label{3.14}
d^c\varphi= \sum_k \sum_{i=1}^4(\Delta_i\varphi_k)e_i^k,
\end{equation}
where $\Delta_i\varphi_k=\varphi_{\tau_ik}-\varphi_k$ and $e_i^k$ is the 1-dimensional basis elements of $K(4)$.
 The coboundary operator $d^c$ is an analog of the exterior
differentiation operator.

 Now we describe a cochain product on the forms of $K(4)$. See \cite{Dezin}
 for details. We denote this product by
 $\cup$. In terms of the homology theory this is the so-called Whitney product.
  First we introduce the $\cup$-product on the chains of the 1-dimensional complex K.
  For the basis elements of $K$ the
$\cup$-product is defined as follows $$ x^j\cup
x^j=x^j, \quad e^j\cup x^{\tau j}=e^j,
\quad x^j\cup e^j=e^j, \quad j\in\Bbb{Z}, $$
supposing the product to be zero in all other case. To arbitrary
forms the $\cup$-product be extended linearly. Let us introduce an
$r$-dimensional complex $K(r)$, ${r=1,2,3}$,\ in an obvious
notation. Let $s_{(p)}^k$ be an arbitrary $p$-dimensional basis
element of $K(r)$. It is convenient to write the basis  element of
$K(r+1)$ in the form   $s_{(p)}^k\otimes s^j$, where
$s_{(p)}^k$ is a basis element of $K(r)$ and $s^j$ is either
$e^j$  or $x^j$, \ $j\in\Bbb{Z}$. Then, supposing
that the $\cup$-product in $K(r)$ has been defined, we introduce
it for basis elements of $K(r+1)$ by the rule
\begin{equation}\label{3.15}
(s^k_{(p)}\otimes
s^j)\cup(s^k_{(q)}\otimes s^\mu)= Q(j,q)(s^k_{(p)}\cup
s^k_{(q)})\otimes(s^j\cup s^\mu),
\end{equation}
 where  the signum
function $Q(j, q)$ is equal to $-1$ if the dimension of both
elements $s^j$, $s_{(q)}^k$ is odd and to $+1$ otherwise.  The
extension of the $\cup$-product to arbitrary forms of $K(r+1)$ is
linear.  Note that the coefficients of forms multiply as matrices.

\begin{prop}{Let $\varphi$ and $\psi$ be arbitrary forms of $K(4)$.
Then
\begin{equation}\label{3.16}
 d^c(\varphi\cup\psi)=d^c\varphi\cup\psi+(-1)^p\varphi\cup
d^c\psi,
\end{equation} where  $p$ \ is the dimension of a form
$\varphi$.}
\end{prop}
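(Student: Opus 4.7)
My plan is the standard double-induction argument for a graded Leibniz rule, adapted to the inductive tensor-product structure $K(r+1)=K(r)\otimes K$ that defines both $d^c$ (via (3.6) and (3.13)) and $\cup$ (via (3.15)). By bilinearity of both sides of (3.16) in $\varphi$ and $\psi$, it suffices to verify the identity on pairs of basis elements, and the coefficients in $gl(2,\mathbb{C})$ simply multiply through, so one can carry real-coefficient basis elements throughout the argument.

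Base case $r=1$: first compute $d^c$ on $K$ from (3.13). Using $\partial e_j=x_{\tau j}-x_j$, one finds $d^c x^j=e^{\sigma j}-e^j$ and $d^c e^j=0$ (consistent with (3.14)). Then enumerate the four cases $(\varphi,\psi)\in\{x^j,e^j\}\times\{x^m,e^m\}$ and verify (3.16) by direct computation, using only the defining rules $x^j\cup x^j=x^j$, $x^j\cup e^j=e^j$, $e^j\cup x^{\tau j}=e^j$ and zero otherwise. Each case reduces to a small cancellation between two Kronecker-delta terms (e.g.\ for $\varphi=\psi=x^j$, the cross terms $-\delta_{m,\tau j}e^j$ and $+\delta_{j,\sigma m}e^{\sigma m}$ annihilate each other, leaving $\delta_{jm}(e^{\sigma j}-e^j)$ on both sides).

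Inductive step: assume (3.16) holds on $K(r)$, and write basis elements of $K(r+1)$ as $\alpha=s^k_{(p)}\otimes s^j$, $\beta=s^l_{(q)}\otimes s^\mu$. Dualising (3.6) gives the splitting
\begin{equation*}
d^c(s^k_{(p)}\otimes s^j)=d^c s^k_{(p)}\otimes s^j+(-1)^p\, s^k_{(p)}\otimes d^c s^j.
\end{equation*}
Apply this to $\alpha\cup\beta=Q(j,q)(s^k_{(p)}\cup s^l_{(q)})\otimes(s^j\cup s^\mu)$, then expand $d^c(s^k_{(p)}\cup s^l_{(q)})$ by the inductive hypothesis and $d^c(s^j\cup s^\mu)$ by the base case. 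On the other hand, expand $d^c\alpha\cup\beta+(-1)^{p+\dim s^j}\alpha\cup d^c\beta$ using (3.15). Match term by term; there are four resulting summands on each side, and the comparison reduces to checking sign identities among $(-1)^p$, $(-1)^q$, $(-1)^{\dim s^j}$ and the values of $Q(j,q)$, $Q(\dim s^j+1,q)$, $Q(j,q+1)$ produced when $d^c$ shifts the dimension of one factor by one.

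The main obstacle is exactly this sign bookkeeping: since $Q(\cdot,\cdot)$ depends on the parities of \emph{both} the $K$-factor and the $K(r)$-factor, acting by $d^c$ flips one parity and hence can flip $Q$, and one must verify that this flip combined with the intrinsic $(-1)^p$ from the tensor-product coboundary reproduces precisely the global $(-1)^{p+\dim s^j}$ demanded by (3.16). Organising the verification in a $2\times 2$ table indexed by (which factor $d^c$ hits on the left) $\times$ (which factor it hits on the right) makes the cancellations transparent and completes the induction.
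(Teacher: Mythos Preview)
Your approach is correct and is the natural one given the inductive definitions (3.6), (3.13), (3.15). Note, however, that the paper does not actually prove Proposition~3.2: it simply states that the proof is ``totally analogous to one in \cite[p.~147]{Dezin} for the case of discrete forms with real coefficients.'' Your outline is essentially a reconstruction of that referenced argument, with the observation that the $gl(2,\mathbb{C})$ coefficients factor out as $ab$ on both sides (since $d^c$ is linear and $\cup$ bilinear with matrix multiplication of coefficients), reducing to the real-coefficient case treated by Dezin. So there is nothing to compare beyond saying that you have supplied the details the paper omits by citation; the inductive structure you describe, base case on $K$ plus tensor-product step with the $Q$-sign bookkeeping, is exactly the expected content of that reference.
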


The proof of Proposition~3.2 is totally analogous to one in
\cite[p.~147]{Dezin} for the case of discrete forms with real
coefficients.

The complex of the cochains $\tilde K(4)$ over the double complex
$\tilde C(4)$ with the operator $d^c$ defined in it by (\ref{3.13})
has the same structure as  $K(4)$. The operation (\ref{3.8}) induces
the respective mapping
\begin{equation*}
\ast : K(4)\rightarrow\tilde{K}(4), \qquad \ast : \tilde
K(4)\rightarrow K(4)
\end{equation*}
by the rule:
\begin{equation}\label{3.17}
<\tilde
c, \ *\varphi>=<*\tilde c, \ \varphi>, \qquad <c, \ *\tilde
\psi>=<*c, \ \tilde\psi>,
\end{equation}
where $c\in C(4), \ \tilde c\in\tilde C(4), \ \varphi\in K(4), \
\tilde\psi\in \tilde K(4)$. Hence for the basic elements of  $K(4)$ or $\tilde K(4)$ we have  relations (\ref{3.9}). It is obviously that Proposition~3.1
is true for any $r$-dimensional cochain $c^r\in K(4)$. So we have
$$\ast\ast\varphi=(-1)^{r(4-r)}\varphi$$ for any discrete $r$-form
$\varphi$ on $K(4)$ and note that the same relation holds for the
Hodge star operator. Thus this operator is a combinatorial analog of
the Hodge star operator.

Let us introduce the following operation  $$\tilde\iota: K(4)
\rightarrow \tilde K(4), \qquad \tilde\iota: \tilde K(4)
\rightarrow K(4)$$ by setting
\begin{equation}\label{3.18}
 \tilde\iota s_{(p)}^k= \tilde s_{(p)}^k, \qquad \tilde\iota\tilde s_{(p)}^k=  s_{(p)}^k,
\end{equation}
where $s_{(p)}^k$ and $\tilde s_{(p)}^k$ are  basis elements of
$K(4)$ and $\tilde K(4) $.  Hence for a $p$-form  $\varphi\in K(4)$
we have \ $\tilde\iota\varphi=\tilde\varphi$. \ Recall that the
coefficients of $\tilde\varphi\in \tilde K(4)$ and  $\varphi\in
K(4)$ are the same.
\begin{prop} The following hold
\begin{align}\label{3.19}
\tilde\iota^2=Id, \quad \tilde\iota\ast&=\ast\tilde\iota, \quad
\tilde\iota d^c=d^c\tilde\iota,\\ \notag
\tilde\iota(\varphi\cup\psi)&=\tilde\iota\varphi\cup\tilde\iota\psi,
\end{align}
where $\varphi, \ \psi\in K(4)$.
\end{prop}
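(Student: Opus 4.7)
The plan is to observe that $\tilde\iota$ is, by construction, the canonical isomorphism which identifies $K(4)$ with $\tilde K(4)$ via the obvious bijection on basis elements that preserves the $gl(2,\mathbb{C})$-coefficients, and that every operation in sight ($\ast$, $d^c$, $\cup$) is defined by the same combinatorial rule on the two complexes. Hence each identity reduces to checking it on basis elements and then extending by linearity (and matrix multiplication for $\cup$). I would therefore verify all four claims basis element by basis element.

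First, the identity $\tilde\iota^2=Id$ is immediate from (\ref{3.18}): for a basis $p$-form $a s_{(p)}^k$ we have $\tilde\iota(a s_{(p)}^k) = a\tilde s_{(p)}^k$ and then $\tilde\iota(a\tilde s_{(p)}^k) = a s_{(p)}^k$, so $\tilde\iota^2$ fixes every basis cochain and extends linearly to the identity.

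Next, for $\tilde\iota\ast=\ast\tilde\iota$, note that the sign in (\ref{3.9}) depends only on the parity of the permutation $((p),(4-p))$ of $(1,2,3,4)$, which is the \emph{same} datum regardless of whether the element sits in $K(4)$ or in $\tilde K(4)$. Concretely, on $a s_{(p)}^k$ one computes $\ast\tilde\iota(a s_{(p)}^k) = \ast(a\tilde s_{(p)}^k) = \pm a\, s_{(4-p)}^k$ while $\tilde\iota\ast(a s_{(p)}^k) = \tilde\iota(\pm a\tilde s_{(4-p)}^k) = \pm a\, s_{(4-p)}^k$ with identical signs. The claim $\tilde\iota d^c = d^c\tilde\iota$ is similar: by (\ref{3.13}) and (\ref{3.14}) $d^c$ acts on coefficients by the forward difference operators $\Delta_i$ associated with the same shift operators (\ref{3.7}) on both complexes, so commuting $\tilde\iota$ past $d^c$ amounts to transporting the same $\Delta_i\varphi_k$ along the bijection on basis elements. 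The product identity $\tilde\iota(\varphi\cup\psi) = \tilde\iota\varphi\cup\tilde\iota\psi$ is proved by induction on the dimension $r$ of the underlying complex $K(r)$: the base case $r=1$ is checked on the four generating products $x^j\cup x^j$, $e^j\cup x^{\tau j}$, $x^j\cup e^j$ (and zero otherwise), all of which are transported verbatim by $\tilde\iota$; the inductive step uses (\ref{3.15}) together with the observation that the signum function $Q(j,q)$ depends only on the dimensions $j$ and $q$, hence is the same in $\tilde K(r+1)$ as in $K(r+1)$.

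There is really no hard step here; the only thing to watch is that the matrix coefficients occur in a definite order under $\cup$, so in the last identity one must check that $\tilde\iota$ is $\mathbb{C}$-linear and acts trivially on the coefficient ring, which is manifest from (\ref{3.18}). The only mild obstacle is bookkeeping in the inductive argument for the $\cup$-product, where one has to carefully track the tensor decomposition $s_{(p)}^k\otimes s^j$ used in (\ref{3.15}); once one observes that $\tilde\iota$ factors through the tensor decomposition in the obvious way ($\tilde\iota(s_{(p)}^k\otimes s^j) = \tilde s_{(p)}^k\otimes \tilde s^j$), the induction collapses.
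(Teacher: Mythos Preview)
Your verification is correct and is the natural approach: since $\tilde\iota$ is defined on basis elements as the identity on coefficients together with the tautological bijection $s_{(p)}^k\leftrightarrow\tilde s_{(p)}^k$, and since $\ast$, $d^c$, and $\cup$ are defined by \emph{literally the same} combinatorial rules on $K(4)$ and $\tilde K(4)$, each identity holds on basis elements and extends by linearity.

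As for comparison with the paper: there is nothing to compare against. The paper states this proposition without proof; Section~3 is explicitly a summary of material from \cite{S3} (``We repeat here the relevant material from \cite{S3} without proofs''), and Proposition~3.3 is simply recorded with no argument given. Your basis-element check is presumably what any proof (including the one in \cite{S3}) would amount to, and it is entirely adequate.
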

\begin{prop} Let $h$ be a discrete 0-form. Then for an arbitrary $p$-form $\varphi\in
K(4)$ we have
\begin{equation}\label{3.20}
 \tilde\iota\ast(h\cup \varphi)=h\cup\tilde\iota\ast\varphi.
 \end{equation}
\end{prop}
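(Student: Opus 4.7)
The plan is to reduce the identity to basis elements and then perform a direct side-by-side comparison, noting that each operation on the left-hand side acts coefficient-wise in a way compatible with the cup product by a $0$-form.

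By linearity of $\ast$, $\tilde\iota$ and $\cup$, it suffices to verify (\ref{3.20}) for $h=bx^m$ with $x^m=x^{m_1}\otimes x^{m_2}\otimes x^{m_3}\otimes x^{m_4}$ and $\varphi = a s^k_{(p)}$ with $a,b\in gl(2,\mathbb{C})$. The first key step is to show that
\[
x^m\cup s^k_{(p)} = \delta_{mk}\, s^k_{(p)},
\]
where $\delta_{mk}=1$ if $m=k$ and $0$ otherwise. This is a short induction on the number of tensor factors using (\ref{3.15}): in the recursion $(s^k_{(p)}\otimes s^j)\cup(s^k_{(q)}\otimes s^\mu)=Q(j,q)(s^k_{(p)}\cup s^k_{(q)})\otimes(s^j\cup s^\mu)$, when the left-hand factor is $x^{m_i}$ (dimension $0$, hence even) one always has $Q(m_i,q)=+1$, while in the $1$-dimensional case $x^{m_i}\cup x^{k_i}=\delta_{m_ik_i}x^{k_i}$ and $x^{m_i}\cup e^{k_i}=\delta_{m_ik_i}e^{k_i}$. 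Consequently $h\cup\varphi = \delta_{mk}\,(ba)\, s^k_{(p)}$.

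Next I would track how the two composite operations act. By (\ref{3.9}), $\ast s^k_{(p)} = \eta\,\tilde s^k_{(4-p)}$ for a sign $\eta\in\{+1,-1\}$ that depends only on the permutation of positions of $e$-factors in $s^k_{(p)}$, not on $k$ nor on the coefficient. Hence
\[
\tilde\iota\ast(h\cup\varphi) = \delta_{mk}\,\eta\,(ba)\, s^k_{(4-p)},
\]
after applying $\tilde\iota$ via (\ref{3.18}). On the other hand,
\[
\tilde\iota\ast\varphi = \eta\, a\, s^k_{(4-p)},
\]
and applying the same cup-product formula (with the $p$-form $\varphi$ replaced by the $(4-p)$-form $\tilde\iota\ast\varphi$, which is again indexed by $k$) gives
\[
h\cup\tilde\iota\ast\varphi = \delta_{mk}\, b\,(\eta a)\, s^k_{(4-p)} = \delta_{mk}\,\eta\,(ba)\, s^k_{(4-p)}.
\]
The two expressions coincide, establishing (\ref{3.20}).

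The only genuine subtlety is that coefficients are matrices and $\cup$ multiplies them in a fixed order: one must ensure that moving $h$ past $\tilde\iota\ast$ does not change the order of $b$ and $a$. This is automatic here because $h$ remains on the left in both expressions and the sign $\eta$ is a scalar that commutes with matrix multiplication; the remaining bookkeeping (that the basis type — and therefore $\eta$ — is the same whether we cup with $x^m$ first or $\ast$-dualize first) is immediate from $x^m\cup s^k_{(p)}$ being a scalar multiple of $s^k_{(p)}$ with unchanged $e$/$x$-pattern.
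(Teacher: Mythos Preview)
Your argument is correct. You correctly reduce to basis elements, verify $x^m\cup s^k_{(p)}=\delta_{mk}\,s^k_{(p)}$ from the one-dimensional rules together with $Q=+1$ when the peeled-off factor is a $0$-cell, and then use that the sign $\eta$ in $\ast s^k_{(p)}=\eta\,\tilde s^k_{(4-p)}$ depends only on the $e/x$-pattern, which is unchanged by left $\cup$-multiplication with $x^m$. The observation that the matrix order $b\cdot a$ is preserved on both sides is the right thing to check in this noncommutative-coefficient setting.

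As for comparison with the paper: there is nothing to compare, since the paper does not give a proof here but simply cites \cite{S3}. Your direct, self-contained verification is therefore strictly more informative than what the present paper provides.
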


\begin{proof} See \cite{S3}.
\end{proof}

Note that  the definition of  inner product in the double complex
and  a discrete analog of the Yang-Millls actions (\ref{2.11}) can
be found in \cite{S3}.

\section{Quaternions and discrete forms}
Let us consider a discrete 0-form with coefficients belonging to
$M(2, \mathbb{C})$. We put
\begin{equation}\label{4.1}
f=\sum_k f_k x^k,
\end{equation}
where  $x^k=x^{k_1}\otimes x^{k_2}\otimes x^{k_3}\otimes x^{k_4}$ is
the 0-dimensional basis element of $K(4)$, \ $k=(k_1,k_2,k_3,k_4), \
k_i\in\Bbb Z.$ Suppose that the matrices $f_k\in M(2, \mathbb{C})$
look like  (\ref{2.5}), i. e.
\begin{equation}\label{4.2}
f_k=\left(
      \begin{array}{cc}
         f_k^1+f_k^2i & f_k^3+f_k^4i \\
         -f_k^3+f_k^4i & f_k^1-f_k^2i \\
       \end{array}
     \right),
\end{equation}
where $f_k^s\in\mathbb{R}, \ s=1,2,3,4$. Then $f_k$ in quaternionic
form can be expressed as
\begin{equation}\label{4.3}
f_k=f_k^{1}+f_k^{2}\mathbf{i}+f_k^{3}\mathbf{j}+f_k^{4}\mathbf{k}.
\end{equation}
Hence the form  (\ref{4.1}) can be considered as a discrete form with
quaternionic coefficients. We will call it simply the quaternionic form when no confusion can arise.
In a proper way we define the quaternionic
0-form $\bar{f}$ with   coefficients $\bar{f_k}$  regarded as the conjugate quaternions of $f_k$.
Let  $f^{-1}$ be the quaternionic form, where
$f_k^{-1}$ is given by (\ref{2.4}). Then we have
\begin{equation}\label{4.4}
f\cup f^{-1}=\sum_k f_k f_k^{-1}x^k=\sum_k x^k.
\end{equation}
\begin{prop} Let $f$ be a discrete 0-form and $f\neq 0$. Then we have
\begin{equation}\label{4.5}
d^cf\cup f^{-1}=-f\cup d^cf^{-1}.
\end{equation}
\end{prop}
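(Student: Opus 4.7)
The plan is to derive this as an immediate consequence of the Leibniz rule for the $\cup$-product (Proposition 3.2) applied to the identity (\ref{4.4}).

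First, I would observe that the right-hand side of (\ref{4.4}), namely $\sum_k x^k$, is a ``constant'' $0$-form whose coefficient at every lattice point is the identity matrix $I \in M(2,\mathbb{C})$. Using formula (\ref{3.14}) for $d^c$ of a $0$-form, we have $\Delta_i I = I - I = 0$ for every $i$ and every $k$, so
\begin{equation*}
d^c\Bigl(\sum_k x^k\Bigr) = 0.
\end{equation*}
Thus from (\ref{4.4}) we obtain $d^c(f \cup f^{-1}) = 0$.

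Next, I would apply Proposition~3.2 to the product $f \cup f^{-1}$. Since $f$ is a $0$-form, the sign $(-1)^p$ in (\ref{3.16}) equals $+1$, and we get
\begin{equation*}
0 = d^c(f \cup f^{-1}) = d^c f \cup f^{-1} + f \cup d^c f^{-1},
\end{equation*}
which rearranges to the desired identity (\ref{4.5}).

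The only subtlety worth flagging is non-commutativity: the coefficients in $K(4)$ multiply as matrices (equivalently, quaternions), so in general $f \cup f^{-1}$ need not coincide with $f^{-1} \cup f$ in terms of intermediate expressions. However, the Leibniz rule of Proposition~3.2 is stated for matrix-valued cochains and already accounts for the correct ordering, and (\ref{4.4}) uses precisely the order $f_k f_k^{-1} = I$. Hence no further care is needed, and the argument is essentially one line once (\ref{4.4}) and Proposition~3.2 are in hand. The hypothesis $f \neq 0$ is used only to ensure that $f_k^{-1}$ exists at every vertex, so that $f^{-1}$ and the product (\ref{4.4}) are well-defined.
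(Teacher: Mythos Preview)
Your proof is correct and follows exactly the same route as the paper: use (\ref{4.4}) together with (\ref{3.14}) to see that $d^c(f\cup f^{-1})=0$, then apply the Leibniz rule of Proposition~3.2 with $p=0$ and rearrange. Your additional remarks on non-commutativity and on the role of the hypothesis $f\neq 0$ are fine elaborations, but the core argument is identical to the paper's.
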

\begin{proof}
By definition  (\ref{3.14}) and according to (\ref{4.4}), we have
 $d^c(f\cup f^{-1})=0$. Using Proposition~3.2 we immediately obtain
 (\ref{4.5}).
\end{proof}

Let us denote by $e$ the following quaternionic 1-form
\begin{equation}\label{4.6}
e=\sum_ke^k=\sum_k (e^k_{1}+e^k_{2}\mathbf{i}+e^k_{3}\mathbf{j}+e^k_{4}\mathbf{k}),
\end{equation}
where $e^k_i$ is the 1-dimensional basis elements of $K(4)$. Let
$A\in K(4)$ be a discrete 1-form. We define the discrete
$SU(2)$-connection $A$ to be
\begin{equation}\label{4.7}
A=\sum_k\sum_{i=1}^4A_k^ie_i^k,
\end{equation}
where \ $A_k^i\in su(2)$ \ and  \ $k=(k_1,k_2,k_3,k_4), \ k_i\in\Bbb
Z.$  Using (\ref{4.3}) and (\ref{4.6}) we write (\ref{4.7}) in quaternionic form as
\begin{equation}\label{4.8}
A=\IM(f\cup e)=\IM\Big(\sum_k f_ke^k\Big).
\end{equation}
Then the $A_k^i$ are given by
\begin{align}\label{4.9}
 &A_k^1=f_k^{2}\mathbf{i}+f_k^{3}\mathbf{j}+f_k^{4}\mathbf{k},\qquad  \ \
 A_k^2=f_k^{1}\mathbf{i}+f_k^{4}\mathbf{j}-f_k^{3}\mathbf{k},\notag\\
 &A_k^3=-f_k^{4}\mathbf{i}+f_k^{1}\mathbf{j}+f_k^{2}\mathbf{k},\qquad
 A_k^4=f_k^{3}\mathbf{i}-f_k^{2}\mathbf{j}+f_k^{1}\mathbf{k}.
 \end{align}
Define the quaternionic 0-form $x$ by
\begin{equation}\label{4.10}
x=\sum_k\kappa x^k, \quad   \kappa=k_{1}+k_{2}\mathbf{i}+k_{3}\mathbf{j}+k_{4}\mathbf{k},
\end{equation}
where $k_i\in\Bbb Z.$
It is easy to check that
\begin{equation}\label{4.11}
d^cx=e.
\end{equation}
Therefore we can rewrite (\ref{4.8}) as
\begin{equation}\label{4.12}
A=\IM(f\cup d^cx).
\end{equation}

Let $g$ be a  quaternionic  0-form (\ref{4.1}) with the components of unit norm, i.e., $|g_k|=1$ for any $k$.
It means that the corresponding discrete form is $SU(2)$-valued.
We  now define a gauge transformation for the discrete potential $A$ which is analogous to  (\ref{2.8}).
This is
\begin{equation}\label{4.13}
A \rightarrow  g^{-1}\cup A\cup g+g^{-1}\cup d^cg,
 \end{equation}
where $A$ is given by (\ref{4.8}) or (\ref{4.12}).
Note that the gauge transformed discrete form $A$ is $su(2)$-valued too.  It is not so obviously as in the continual case
but follows immediately from the definition of $\cup$-multiplication and formula (\ref{3.16}).
More generally, if we assume that the gauge
transformation $g$ is an arbitrary quaternionic 0-form, then we  take
the imaginary part of $g^{-1}\cup A\cup g+g^{-1}\cup d^cg$ in (\ref{4.13}).   For a deeper discussion of
 gauge invariant discrete models of the Yang-Mills theory we refer the reader to
\cite{S1, S3}.

  An arbitrary discrete 2-form $F\in K(4)$  can be written  as follows
\begin{equation}\label{4.14}
F=\sum_k\sum_{i<j} F_k^{ij}\varepsilon_{ij}^k,
 \end{equation}
 where \ $ F_k^{ij}\in gl(2,\Bbb{C})$, \  $ \varepsilon_{ij}^k$ \ is the
  2-dimensional  basis element of \ $K(4)$ \ and \ $1\leq i,j\leq4$, \ $k=(k_1,k_2,k_3,k_4)$, \ $k_i\in\Bbb Z$.
Let $F$ is given by
\begin{equation}\label{4.15}
F=d^cA+A\cup A.
\end{equation}

Combining (\ref{4.7}) and (\ref{4.15}) and using (\ref{3.12}), (\ref{3.13}) and
(\ref{3.15}), we obtain
 \begin{equation}\label{4.16}
 F_k^{ij}=\Delta_iA_k^j-\Delta_jA_k^i+A_k^iA_{\tau_ik}^j-
 A_k^jA_{\tau_jk}^i,
 \end{equation}
 where $\Delta_iA_k^j=A_{\tau_ik}^j-A_k^j$ and $\tau_ik$ is given by (\ref{3.7}).

 Let us define a discrete analog of the exterior covariant
differentiation operator (\ref{2.10}) as follows
\begin{equation}\label{4.17}
d_A^c\Omega=d^c\Omega+A\cup\Omega+(-1)^{p+1}\Omega\cup A,
\end{equation}
 where $\Omega$ is an arbitrary $p$-form of $K(4)$ looking like (\ref{3.11}).
 Then a  discrete analog of Equations (\ref{2.13}) can be written as
\begin{equation}\label{4.18}
 d_A^c F=0, \quad d_A^c\ast\tilde\iota F=0,
\end{equation}
where $\tilde\iota$ is given by (\ref{3.18}).
 It is easy to check that the combinatorial Bianchi identity:
 \begin{equation}\label{4.19}
 d^cF+A\cup F-F\cup A=0
\end{equation}
holds for the discrete curvature form (\ref{4.15}) (cf.
(\ref{2.13})).

\begin{rem}In the continual case the curvature form $F$
 (\ref{2.9}) takes values in the algebra $su(2)$ for any $su(2)$-valued connection form $A$.
 Unfortunately, this is not true in the discrete case because, generally speaking, the components
 $A_k^iA_{\tau_ik}^j- A_k^jA_{\tau_jk}^i$ of the form $A\cup A$ (see (\ref{4.16})) do not belong to $su(2)$.
\end{rem}

 To define an $su(2)$-valued discrete analog of the curvature 2-form  we use the quaternionic form of $A$ (\ref{4.8}) and put in (\ref{4.15}). Then the discrete curvature form $F$ is given by
 \begin{equation}\label{4.20}
F=\IM\{d^cf\cup e+(f\cup e)\cup(f\cup e)\}.
\end{equation}
It should be noted that in the discrete case  calculation of the
imaginary part of $f\cup e$ and computing its curvature do not
commute.
\begin{prop}
 If $A = \IM(x^{-1}\cup d^cx)$, where $x$ is given by (\ref{4.10}), then   $F=0$.
\end{prop}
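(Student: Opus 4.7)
The strategy is to identify $A$ as the imaginary part of a discrete "pure gauge" 1-form and to show that this pure gauge form has vanishing full quaternionic curvature; then taking imaginary parts via (\ref{4.20}) yields $F=0$.

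Set $B := x^{-1}\cup d^c x = x^{-1}\cup e$, so that $A=\IM B$ corresponds to the choice $f=x^{-1}$ in (\ref{4.8}). By formula (\ref{4.20}) it suffices to prove
\[
F_B := d^c B + B\cup B = 0
\]
as a quaternionic 2-form, since $F=\IM F_B$.

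First I would compute $d^c B$ by the Leibniz rule of Proposition~3.2 applied to the 0-form $x^{-1}$ and the 1-form $e=d^c x$, together with the identity $d^c d^c=0$ (a direct consequence of $\partial\partial=0$ and the duality (\ref{3.13})). This gives $d^c B = d^c x^{-1}\cup e$.

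The main step is to simplify $B\cup B$. Using associativity of the cup product I would rewrite $B\cup B = x^{-1}\cup(e\cup x^{-1})\cup e$, and then apply Proposition~4.1 with the choice $f=x$ to replace $e\cup x^{-1} = d^c x\cup x^{-1}$ by $-x\cup d^c x^{-1}$. Re-associating produces the factor $x^{-1}\cup x$, which by identity (\ref{4.4}) is the unit 0-form $\sum_k x^k$, i.e., the multiplicative identity for $\cup$. The whole expression therefore collapses to $B\cup B = -d^c x^{-1}\cup e$, which exactly cancels $d^c B$. Hence $F_B=0$ and consequently $F=\IM F_B=0$.

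The main subtlety is precisely this cup-product manipulation: one must verify associativity of the Whitney-type discrete cup product (implicit in the inductive definition (\ref{3.15})) and the fact that $\sum_k x^k$, with identity matrix coefficient, acts as a two-sided unit for $\cup$. Both are standard for the Whitney product but should be flagged or cited from \cite{Dezin}. A minor caveat is that $x^{-1}$ is undefined at $k=0$, so the identity is to be read on the sub-complex where $\kappa\neq 0$, in direct parallel with the behavior of the corresponding continual potential on $\Bbb{R}^4\setminus\{0\}$.
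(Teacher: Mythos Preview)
Your proof is correct and follows essentially the same route as the paper: both compute $d^cB$ via Leibniz plus $d^cd^c=0$, then use Proposition~4.1 to rewrite the quadratic term so that a factor $x\cup x^{-1}$ (or $x^{-1}\cup x$) appears and collapses to the unit via (\ref{4.4}), leaving two terms that cancel. The only cosmetic difference is where the identity (\ref{4.5}) is inserted---you apply it to the inner factor $d^cx\cup x^{-1}$, while the paper applies it to the outer factor $x^{-1}\cup d^cx$---and your explicit flagging of associativity, the unit property, and the $\kappa\neq0$ caveat, none of which the paper spells out.
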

\begin{proof}
Using (\ref{4.5}) and putting $f=x^{-1}$ in (\ref{4.20}) we get
\begin{align*}
F&=\IM(d^c(x^{-1}\cup d^cx)+(x^{-1}\cup d^cx)\cup (x^{-1}\cup d^cx)\\
&=\IM(d^cx^{-1}\cup d^cx-d^cx^{-1}\cup x\cup x^{-1}\cup d^cx).
\end{align*}
According to (\ref{4.4}) the form $x\cup x^{-1}$ has unit components. Hence
\begin{equation*}
d^cx^{-1}\cup x\cup x^{-1}\cup d^cx=d^cx^{-1}\cup d^cx.
\end{equation*}
\end{proof}
We now write down the components of (\ref{4.14}) using quaternions.
Putting (\ref{4.9}) in (\ref{4.16}) we find that
\begin{align*}
F_k^{12}&=(\Delta_1f_k^1-\Delta_2f_k^2-f_k^3f_{\tau_1k}^3-f_k^4f_{\tau_1k}^4-f_k^3f_{\tau_2k}^3-f_k^4f_{\tau_2k}^4)\mathbf{i}\\
&+(\Delta_1f_k^4-\Delta_2f_k^3+f_k^2f_{\tau_1k}^3+f_k^4f_{\tau_1k}^1+f_k^1f_{\tau_2k}^4+f_k^3f_{\tau_2k}^2)\mathbf{j}\\
&+(-\Delta_1f_k^3-\Delta_2f_k^4+f_k^2f_{\tau_1k}^4-f_k^3f_{\tau_1k}^1-f_k^1f_{\tau_2k}^3+f_k^4f_{\tau_2k}^2)\mathbf{k}\\
&-f_k^2f_{\tau_1k}^1-f_k^3f_{\tau_1k}^4+f_k^4f_{\tau_1k}^3+f_k^1f_{\tau_2k}^2+f_k^4f_{\tau_2k}^3-f_k^3f_{\tau_2k}^4,
\end{align*}
\begin{align*}
F_k^{13}&=(-\Delta_1f_k^4-\Delta_3f_k^2+f_k^3f_{\tau_1k}^2-f_k^4f_{\tau_1k}^1-f_k^1f_{\tau_3k}^4+f_k^2f_{\tau_3k}^3)\mathbf{i}\\
&+(\Delta_1f_k^1-\Delta_3f_k^3-f_k^2f_{\tau_1k}^2-f_k^4f_{\tau_1k}^4-f_k^4f_{\tau_3k}^4-f_k^2f_{\tau_3k}^2)\mathbf{j}\\
&+(\Delta_1f_k^2-\Delta_3f_k^4+f_k^2f_{\tau_1k}^1+f_k^3f_{\tau_1k}^4+f_k^4f_{\tau_3k}^3+f_k^1f_{\tau_3k}^2)\mathbf{k}\\
&+f_k^2f_{\tau_1k}^4-f_k^3f_{\tau_1k}^1-f_k^4f_{\tau_1k}^2-f_k^4f_{\tau_3k}^2+f_k^1f_{\tau_3k}^3+f_k^2f_{\tau_3k}^4,
\end{align*}
\begin{align*}
F_k^{14}&=(\Delta_1f_k^3-\Delta_4f_k^2+f_k^3f_{\tau_1k}^1+f_k^4f_{\tau_1k}^2+f_k^2f_{\tau_4k}^4+f_k^1f_{\tau_4k}^3)\mathbf{i}\\
&+(-\Delta_1f_k^2-\Delta_4f_k^3-f_k^2f_{\tau_1k}^1+f_k^4f_{\tau_1k}^3+f_k^3f_{\tau_4k}^4-f_k^1f_{\tau_4k}^2)\mathbf{j}\\
&+(\Delta_1f_k^1-\Delta_4f_k^4-f_k^2f_{\tau_1k}^2-f_k^3f_{\tau_1k}^3-f_k^3f_{\tau_4k}^3-f_k^2f_{\tau_4k}^2)\mathbf{k}\\
&-f_k^2f_{\tau_1k}^3+f_k^3f_{\tau_1k}^2-f_k^4f_{\tau_1k}^1+f_k^3f_{\tau_4k}^2-f_k^2f_{\tau_4k}^3+f_k^1f_{\tau_4k}^4,
\end{align*}
\begin{align*}
F_k^{23}&=(-\Delta_2f_k^4-\Delta_3f_k^1+f_k^4f_{\tau_2k}^2+f_k^3f_{\tau_2k}^1+f_k^1f_{\tau_3k}^3+f_k^2f_{\tau_3k}^4)\mathbf{i}\\
&+(\Delta_2f_k^1-\Delta_3f_k^4-f_k^1f_{\tau_2k}^2+f_k^3f_{\tau_2k}^4+f_k^4f_{\tau_3k}^3-f_k^2f_{\tau_3k}^1)\mathbf{j}\\
&+(\Delta_2f_k^2+\Delta_3f_k^3+f_k^1f_{\tau_2k}^1+f_k^4f_{\tau_2k}^4+f_k^4f_{\tau_3k}^4+f_k^1f_{\tau_3k}^1)\mathbf{k}\\
&+f_k^1f_{\tau_2k}^4-f_k^4f_{\tau_2k}^1+f_k^3f_{\tau_2k}^2-f_k^4f_{\tau_3k}^1+f_k^1f_{\tau_3k}^4-f_k^2f_{\tau_3k}^3,
\end{align*}
\begin{align*}
F_k^{24}&=(\Delta_2f_k^3-\Delta_4f_k^1+f_k^4f_{\tau_2k}^1-f_k^3f_{\tau_2k}^2-f_k^2f_{\tau_4k}^3+f_k^1f_{\tau_4k}^4)\mathbf{i}\\
&+(-\Delta_2f_k^2-\Delta_4f_k^4-f_k^1f_{\tau_2k}^1-f_k^3f_{\tau_2k}^3-f_k^3f_{\tau_4k}^3-f_k^1f_{\tau_4k}^1)\mathbf{j}\\
&+(\Delta_2f_k^1+\Delta_4f_k^3-f_k^1f_{\tau_2k}^2-f_k^4f_{\tau_2k}^3-f_k^3f_{\tau_4k}^4-f_k^2f_{\tau_4k}^1)\mathbf{k}\\
&-f_k^1f_{\tau_2k}^3+f_k^4f_{\tau_2k}^2+f_k^3f_{\tau_2k}^1+f_k^3f_{\tau_4k}^1-f_k^2f_{\tau_4k}^4-f_k^1f_{\tau_4k}^3,
\end{align*}
\begin{align*}
F_k^{34}&=(\Delta_3f_k^3+\Delta_4f_k^4+f_k^1f_{\tau_3k}^1+f_k^2f_{\tau_3k}^2+f_k^2f_{\tau_4k}^2+f_k^1f_{\tau_4k}^1)\mathbf{i}\\
&+(-\Delta_3f_k^2-\Delta_4f_k^1+f_k^4f_{\tau_3k}^1+f_k^2f_{\tau_3k}^3+f_k^3f_{\tau_4k}^2+f_k^1f_{\tau_4k}^4)\mathbf{j}\\
&+(\Delta_3f_k^1-\Delta_4f_k^2+f_k^4f_{\tau_3k}^2-f_k^1f_{\tau_3k}^3-f_k^3f_{\tau_4k}^1+f_k^2f_{\tau_4k}^4)\mathbf{k}\\
&+f_k^4f_{\tau_3k}^3+f_k^1f_{\tau_3k}^2-f_k^2f_{\tau_3k}^1-f_k^3f_{\tau_4k}^4-f_k^2f_{\tau_4k}^1+f_k^1f_{\tau_4k}^2.
\end{align*}
To obtain the components of (\ref{4.20}) we must take the imaginary part of these equations.

\begin{prop} The discrete curvature 2-form $F$ (\ref{4.15}) is $su(2)$-valued if and only if
\begin{align*}
-f_k^2f_{\tau_1k}^1-f_k^3f_{\tau_1k}^4+f_k^4f_{\tau_1k}^3+f_k^1f_{\tau_2k}^2+f_k^4f_{\tau_2k}^3-f_k^3f_{\tau_2k}^4=0,\\
f_k^2f_{\tau_1k}^4-f_k^3f_{\tau_1k}^1-f_k^4f_{\tau_1k}^2-f_k^4f_{\tau_3k}^2+f_k^1f_{\tau_3k}^3+f_k^2f_{\tau_3k}^4=0,\\
-f_k^2f_{\tau_1k}^3+f_k^3f_{\tau_1k}^2-f_k^4f_{\tau_1k}^1+f_k^3f_{\tau_4k}^2-f_k^2f_{\tau_4k}^3+f_k^1f_{\tau_4k}^4=0,\\
f_k^1f_{\tau_2k}^4-f_k^4f_{\tau_2k}^1+f_k^3f_{\tau_2k}^2-f_k^4f_{\tau_3k}^1+f_k^1f_{\tau_3k}^4-f_k^2f_{\tau_3k}^3=0,\\
-f_k^1f_{\tau_2k}^3+f_k^4f_{\tau_2k}^2+f_k^3f_{\tau_2k}^1+f_k^3f_{\tau_4k}^1-f_k^2f_{\tau_4k}^4-f_k^1f_{\tau_4k}^3=0,\\
f_k^4f_{\tau_3k}^3+f_k^1f_{\tau_3k}^2-f_k^2f_{\tau_3k}^1-f_k^3f_{\tau_4k}^4-f_k^2f_{\tau_4k}^1+f_k^1f_{\tau_4k}^2=0.
\end{align*}
\end{prop}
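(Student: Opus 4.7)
The plan is to reduce the proposition to a pure bookkeeping check on the already-computed components of $F$. Recall from Section~2 that the Lie algebra $su(2)$ is identified with the pure imaginary quaternions, that is, with the $\mathbb R$-span of $\mathbf{i},\mathbf{j},\mathbf{k}$ inside $M(2,\mathbb C)$ via (\ref{2.6}). Consequently, a matrix $q=q_1+q_2\mathbf{i}+q_3\mathbf{j}+q_4\mathbf{k}$ lies in $su(2)$ if and only if its real (scalar) part $q_1$ vanishes. Applied coefficient by coefficient to $F=\sum_{k}\sum_{i<j}F_k^{ij}\varepsilon_{ij}^k$, this gives the criterion: $F$ is $su(2)$-valued if and only if $\RE F_k^{ij}=0$ for all $k$ and all $1\le i<j\le 4$.

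So the remaining task is to extract $\RE F_k^{ij}$ from the six displayed formulas for the $F_k^{ij}$ and verify that these are exactly the six equations in the statement. Each of the six displayed expressions is already decomposed as a sum of a $\mathbf{i}$-term, a $\mathbf{j}$-term, a $\mathbf{k}$-term and a remainder without $\mathbf{i},\mathbf{j},\mathbf{k}$; the latter is by construction the real part. Reading off the last line of each of the six formulas for $F_k^{12},F_k^{13},F_k^{14},F_k^{23},F_k^{24},F_k^{34}$ and setting it equal to zero reproduces, term for term, the six equations in the proposition.

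The only substantive step is therefore justifying the six displayed expansions themselves. These are obtained by substituting the quaternion-valued components $A_k^i$ from (\ref{4.9}) into (\ref{4.16}) and simplifying with the multiplication rules (\ref{2.2}); the differences $\Delta_i A_k^j-\Delta_j A_k^i$ contribute only to the imaginary part (they are pure imaginary since each $A_k^i\in su(2)$), while each product $A_k^i A_{\tau_ik}^j$ of two pure imaginary quaternions generates a real part equal to $-\langle \IM A_k^i,\IM A_{\tau_ik}^j\rangle$ together with a pure imaginary cross-product contribution. Collecting these scalar parts for the two products in each $F_k^{ij}$ gives exactly the expressions in the last line of each display.

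The genuinely tedious step — and the only place where an error is likely — is the sign and index bookkeeping in collecting the scalar parts of $A_k^i A_{\tau_ik}^j-A_k^jA_{\tau_jk}^i$ for all six index pairs, since the formula is not symmetric in $i,j$ and the shift operators $\tau_i,\tau_j$ act on different factors. Once this is performed once and the six displays are trusted, the proposition is immediate: the six vanishing conditions are literally the statement that $\RE F_k^{ij}=0$ for $1\le i<j\le 4$, which is equivalent to $F$ taking values in $su(2)$.
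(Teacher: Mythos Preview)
Your argument is correct and is exactly the paper's approach: the paper's own proof is the single line ``From the above it follows immediately,'' meaning one reads off the real (scalar) parts from the six displayed expressions for $F_k^{ij}$ and sets them to zero. Your added remarks identifying $su(2)$ with the pure imaginary quaternions and explaining why only the product terms $A_k^iA_{\tau_ik}^j-A_k^jA_{\tau_jk}^i$ can contribute a real part make the immediate inference explicit but do not change the route.
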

\begin{proof}
From the above it follows immediately.
\end{proof}
\begin{prop} Let $e$ is given by (\ref{4.6}). Then the 2-form $e\cup\bar e$ is self-dual, i.e.,
\begin{equation}\label{4.21}
 e\cup\bar e=\ast\tilde\iota (e\cup\bar e),
\end{equation}
and  $\bar e\cup e$ is anti-self-dual, i.e.,
\begin{equation}\label{4.22}
 \bar e\cup e=-\ast\tilde\iota(\bar e\cup e).
\end{equation}
\end{prop}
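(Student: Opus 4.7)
The plan is to expand both sides of (\ref{4.21}) and (\ref{4.22}) in the basis of $2$-forms $\varepsilon_{ij}^{k}$ and match coefficients. Introduce the auxiliary $1$-forms $e_i:=\sum_k e_i^{k}$, so that $e = e_1 + \mathbf{i}\,e_2 + \mathbf{j}\,e_3 + \mathbf{k}\,e_4$ and $\bar e = e_1 - \mathbf{i}\,e_2 - \mathbf{j}\,e_3 - \mathbf{k}\,e_4$, and set $E_{ij}:=e_i \cup e_j$. Applying (\ref{3.15}) together with the 1-dimensional rules $x^j\cup x^j=x^j$, $e^j\cup x^{\tau j}=e^j$, $x^j\cup e^j=e^j$, one sees that $e_i^{k}\cup e_i^{l}=0$ in all cases (the slot-$i$ factor is $e^{k_i}\cup e^{l_i}=0$), while for $i\neq j$ the only non-vanishing contribution is $e_i^{k}\cup e_j^{\tau_i k}=\operatorname{sgn}(j-i)\,\varepsilon_{\min(i,j)\,\max(i,j)}^{k}$. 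Reindexing the double sum then yields $E_{ii}=0$ and the anti-commutation
\begin{equation*}
E_{ji}=-E_{ij},\qquad E_{ij}=\sum_{k}\varepsilon_{ij}^{k}\quad\text{for }i<j.
\end{equation*}

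Substituting these expansions into $e\cup\bar e$, expanding by means of the quaternion identities (\ref{2.2}), and collecting the sixteen resulting terms using the above anti-commutation gives
\begin{equation*}
e\cup\bar e=-2\bigl[\mathbf{i}(E_{12}+E_{34})+\mathbf{j}(E_{13}-E_{24})+\mathbf{k}(E_{14}+E_{23})\bigr],
\end{equation*}
and analogously
\begin{equation*}
\bar e\cup e=2\bigl[\mathbf{i}(E_{12}-E_{34})+\mathbf{j}(E_{13}+E_{24})+\mathbf{k}(E_{14}-E_{23})\bigr].
\end{equation*}
Using (\ref{3.9}) and (\ref{3.17}), the action of $\ast\tilde\iota$ on basis $2$-forms exchanges $\varepsilon_{12}^{k}\leftrightarrow\varepsilon_{34}^{k}$ and $\varepsilon_{14}^{k}\leftrightarrow\varepsilon_{23}^{k}$ with sign $+1$ (the corresponding permutations being even), while $\varepsilon_{13}^{k}\leftrightarrow\varepsilon_{24}^{k}$ with sign $-1$. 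Applying this to the displayed formula for $e\cup\bar e$ leaves it invariant, which is (\ref{4.21}); the sign flips built into the coefficients of $\bar e\cup e$ turn the same calculation into (\ref{4.22}).

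The main obstacle is the sign bookkeeping required to establish $E_{ji}+E_{ij}=0$. The cup product (\ref{3.15}) is not skew-symmetric at the level of individual basis elements, since its definition involves the shifts $\tau_i$; the desired anti-commutation emerges only after summation over $k$, once the signum $Q(\cdot,\cdot)$ from (\ref{3.15}) has been carefully tracked through each recursive application of the formula. Once $E_{ji}=-E_{ij}$ is in hand, the remainder of the argument reduces to a direct comparison of the six coefficients against the prescribed action of $\ast\tilde\iota$.
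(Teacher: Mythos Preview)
Your proof is correct and follows essentially the same approach as the paper: introduce $e_i=\sum_k e_i^k$, establish $e_i\cup e_j=-e_j\cup e_i=\sum_k\varepsilon_{ij}^k$ for $i<j$, expand $e\cup\bar e$ and $\bar e\cup e$ via the quaternion relations (\ref{2.2}), and then read off the action of $\ast\tilde\iota$ on the basis 2-forms. The paper states the anti-commutation $e_j\cup e_i=-\varepsilon_{ij}$ as a direct consequence of (\ref{3.15}) without further justification, whereas you spell out the sign-tracking and the reindexing over $k$; this extra care is welcome but does not change the route.
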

\begin{proof}
Denote
\begin{equation*}
e_i=\sum_ke_i^k, \qquad  \varepsilon_{ij}=\sum_k\varepsilon_{ij}^k.
\end{equation*}
Recall that $e_i^k$  and $\varepsilon_{ij}^k$ are the 1-dimensional
and 2-dimensional basic elements of $K(4)$ (see also (\ref{3.4}) and
(\ref{3.5})). From this by (\ref{3.15}) we obtain $e_i\cup
e_j=\varepsilon_{ij}$ and $e_j\cup e_i=-\varepsilon_{ij}$ for all
$i<j$. Then we have
\begin{align*}
e\cup\bar e&=(e_1+e_2\mathbf{i}+e_3\mathbf{j}+e_4\mathbf{k})\cup(e_1-e_2\mathbf{i}-e_3\mathbf{j}-e_4\mathbf{k})\\
&=-2\{(e_1\cup e_2+e_3\cup e_4)\mathbf{i}+(e_1\cup e_3-e_2\cup e_4)\mathbf{j}+(e_1\cup e_4+e_2\cup e_3)\mathbf{k}\}\\
&=-2\{(\varepsilon_{12}+\varepsilon_{34})\mathbf{i}+(\varepsilon_{13}-\varepsilon_{24})\mathbf{j}+
(\varepsilon_{14}+\varepsilon_{23})\mathbf{k}\}.
\end{align*}
Using (\ref{3.17}) and (\ref{3.19}) we get
\begin{equation*}
\ast\tilde\iota (e\cup\bar e)=-2\tilde\iota\{(\tilde{\varepsilon}_{34}+\tilde{\varepsilon}_{12})\mathbf{i}+(-\tilde{\varepsilon}_{24}+\tilde{\varepsilon}_{13})\mathbf{j}+
(\tilde{\varepsilon}_{23}+\tilde{\varepsilon}_{14})\mathbf{k}\}=e\cup\bar e.
\end{equation*}
In the same way we obtain (\ref{4.22}).
\end{proof}
\begin{cor} For any quaternionic 0-form $f$ the form $f\cup e\cup\bar e$ is self-dual and $f\cup \bar e\cup e$ is anti-self-dual.
\end{cor}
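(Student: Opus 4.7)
The plan is to derive both statements directly from Proposition~4.4 by moving the 0-form $f$ through the duality operator $\ast\tilde\iota$ via Proposition~3.4. First I would observe that the $\cup$-product with a 0-form is associative with any other form: from the definition (\ref{3.15}), the signum function $Q(j,q)$ equals $+1$ whenever one of the factors has even (in particular, zero) dimension, so $f\cup(e\cup\bar e)=(f\cup e)\cup\bar e$, and similarly for $f\cup\bar e\cup e$. This lets us unambiguously write $f\cup e\cup\bar e$.

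Next, I would rewrite the self-duality condition of Proposition~4.4 using $\ast\tilde\iota=\tilde\iota\ast$ from Proposition~3.3:
\begin{equation*}
e\cup\bar e=\tilde\iota\ast(e\cup\bar e),\qquad \bar e\cup e=-\tilde\iota\ast(\bar e\cup e).
\end{equation*}
Then I would $\cup$-multiply on the left by the quaternionic 0-form $f$ and apply Proposition~3.4 in the form $\tilde\iota\ast(h\cup\varphi)=h\cup\tilde\iota\ast\varphi$ (valid for any 0-form $h$ and $p$-form $\varphi$). For the first identity this gives
\begin{equation*}
f\cup e\cup\bar e=f\cup\tilde\iota\ast(e\cup\bar e)=\tilde\iota\ast\bigl(f\cup(e\cup\bar e)\bigr)=\ast\tilde\iota(f\cup e\cup\bar e),
\end{equation*}
so $f\cup e\cup\bar e$ is self-dual. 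Repeating the argument with the anti-self-dual identity from Proposition~4.4 yields
\begin{equation*}
f\cup\bar e\cup e=-\ast\tilde\iota(f\cup\bar e\cup e),
\end{equation*}
so this form is anti-self-dual.

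I do not foresee a real obstacle here: everything reduces to the transport property (\ref{3.20}) together with the commutation $\tilde\iota\ast=\ast\tilde\iota$. The only point that requires a moment's care is the associativity of $\cup$ needed to regroup $f\cup e\cup\bar e$ as $f\cup(e\cup\bar e)$ before invoking Proposition~3.4, and this is immediate from the sign rule in (\ref{3.15}) since $f$ is a 0-form.
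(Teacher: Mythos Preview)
Your argument is correct and is exactly the route the paper takes: the paper's proof is the single sentence ``This follows immediately from (\ref{3.20}),'' i.e., from Proposition~3.4, which is precisely the transport identity you invoke. The only slip is in the numbering: the self-duality of $e\cup\bar e$ and anti-self-duality of $\bar e\cup e$ is Proposition~4.5 in the paper (Remark~4.2 shares the theorem counter), not Proposition~4.4; otherwise nothing needs to change.
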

\begin{proof}
This follows immediately from (\ref{3.20}).
\end{proof}
Discrete  self-dual and anti-self-dual equations (discrete analogs  of Equations (\ref{2.13})) are defined by
\begin{equation}\label{4.23}
 F=\tilde\iota\ast F, \qquad F=-\tilde\iota\ast F,
\end{equation}
where $F$ is the discrete curvature form (\ref{4.4}). Using
(\ref{4.5}), by the definitions of $\tilde\iota$ and $\ast$, the
first equation (self-dual) of (\ref{5.1}) can be rewritten as
follows
\begin{equation}\label{4.24}
 F_k^{12}=F_k^{34}, \qquad F_k^{13}=-F_k^{24}, \qquad
 F_k^{14}=F_k^{23}.
\end{equation}
By analogue with the continual case solutions of (\ref{4.23}) (or
(\ref{4.24})) are called  instantons and anti-instantons respectively.

\section{Discrete instanton and anti-instanton}
In further  analogy with the continual case consider the discrete
$SU(2)$-connection $A$. Let $A$ be the quaternionic 1-form  (\ref{4.8}), where the components of  $f$ are given by
\begin{equation}\label{5.1}
f_k=\frac{\bar{\kappa}}{1+|\kappa|^2},
\end{equation}
where
$\kappa=k_{1}+k_{2}\mathbf{i}+k_{3}\mathbf{j}+k_{4}\mathbf{k}$, \ $k_i\in\Bbb Z.$
Putting the last in (\ref{4.9}) we obtain
\begin{align}\label{5.2}
A_k^1=\frac{-k_{2}\mathbf{i}-k_{3}\mathbf{j}-k_{4}\mathbf{k}}{1+|\kappa|^{2}},  \qquad
A_k^2=\frac{k_{1}\mathbf{i}-k_{4}\mathbf{j}+k_{3}\mathbf{k}}{1+|\kappa|^{2}},\notag \\
A_k^3=\frac{k_{4}\mathbf{i}+k_{1}\mathbf{j}-k_{2}\mathbf{k}}{1+|\kappa|^{2}}, \qquad
A_k^4=\frac{-k_{3}\mathbf{i}+k_{2}\mathbf{j}+k_{1}\mathbf{k}}{1+|\kappa|^{2}}.
\end{align}
It is convenient to denote
\begin{equation}\label{5.3}
M_i=\frac{1}{(1+|\kappa|^2)(1+|\tau_i\kappa|^2)}, \qquad i=1,2,3,4.
\end{equation}
Recall that the shift operator $\tau_i$ is given by (\ref{3.7}).
 Substituting (\ref{5.2}) in (\ref{4.16}) and using (\ref{5.3}) we find that
\begin{align*}
F_k^{12}&=\{M_1(1+k_2^2-k_1^2-k_1)+M_2(1+k_1^2-k_2^2-k_2)\}\mathbf{i}\\
&+\{M_1(k_4k_1+k_2k_3)-M_2(k_3k_2+k_4k_1)\}\mathbf{j}\\
&+\{M_1(k_2k_4-k_1k_3)+M_2(k_1k_3-k_2k_4)\}\mathbf{k}\\
&+M_1(k_1k_2+k_2)-M_2(k_1k_2+k_1),
\end{align*}
\begin{align*}
F_k^{13}&=\{M_1(k_2k_3-k_1k_4)+M_3(k_1k_4-k_2k_3)\}\mathbf{i}\\
&+\{M_1(1+k_3^2-k_1^2-k_1)+M_3(1+k_1^2-k_3^2-k_3)\}\mathbf{j}\\
&+\{M_1(k_1k_2+k_3k_4)-M_3(k_3k_4+k_1k_2)\}\mathbf{k}\\
&+M_1(k_1k_3+k_3)-M_3(k_1k_3+k_1),
\end{align*}
\begin{align*}
F_k^{14}&=\{M_1(k_1k_3+k_2k_4)-M_4(k_2k_4+k_1k_3)\}\mathbf{i}\\
&+\{M_1(k_3k_4-k_1k_2)+M_4(k_1k_2-k_3k_4)\}\mathbf{j}\\
&+\{M_1(1+k_4^2-k_1^2-k_1)+M_4(1+k_1^2-k_4^2-k_4)\}\mathbf{k}\\
&+M_1(k_1k_4+k_4)-M_4(k_1k_4+k_1),
\end{align*}
\begin{align*}
F_k^{23}&=\{-M_2(k_2k_4+k_1k_3)+M_3(k_1k_3+k_2k_4)\}\mathbf{i}\\
&+\{M_2(k_3k_4-k_1k_2)+M_3(k_1k_2-k_3k_4)\}\mathbf{j}\\
&-\{M_2(1+k_3^2-k_2^2-k_2)+M_3(1+k_2^2-k_3^2-k_3)\}\mathbf{k}\\
&+M_2(k_2k_3+k_3)-M_3(k_2k_3+k_2),
\end{align*}
\begin{align*}
F_k^{24}&=\{M_2(k_2k_3-k_4k_1)+M_4(k_1k_4-k_2k_3)\}\mathbf{i}\\
&+\{M_2(1+k_4^2-k_2^2-k_2)+M_4(1+k_2^2-k_4^2-k_4)\}\mathbf{j}\\
&-\{M_2(k_1k_2+k_3k_4)-M_4(k_3k_4+k_1k_2)\}\mathbf{k}\\
&+M_2(k_2k_4+k_4)-M_4(k_2k_4+k_2),
\end{align*}
\begin{align*}
F_k^{34}&=-\{M_3(1+k_4^2-k_3^2-k_3)+M_4(1+k_3^2-k_4^2-k_4)\}\mathbf{i}\\
&+\{M_3(-k_2k_3-k_1k_4)+M_4(k_1k_4+k_2k_3)\}\mathbf{j}\\
&+\{M_3(k_2k_4-k_1k_3)+M_4(k_1k_3-k_2k_4)\}\mathbf{k}\\
&+M_3(k_3k_4+k_4)-M_4(k_3k_4+k_3).
\end{align*}
\begin{prop} The 2-form $F$ with components $F_k^{ij}$   above   is $su(2)$-valued if and only if
\begin{equation}\label{5.4}
k_1=k_2=k_3=k_4.
\end{equation}
\end{prop}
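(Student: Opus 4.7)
The plan is to combine Proposition 4.3 (which characterizes when $F$ is $su(2)$-valued in terms of six polynomial conditions in the $f_k^s$) with the explicit component formulas for $F_k^{ij}$ displayed immediately above the statement. Since $F$ is $su(2)$-valued precisely when each $F_k^{ij}$ has vanishing real (scalar) quaternionic part, the six conditions of Proposition 4.3 translate directly, via formula (5.1), into the vanishing of the six scalar summands appearing in the $F_k^{ij}$ above. So my first step is simply to read off these scalar parts: for each pair $i<j$ with $i,j\in\{1,2,3,4\}$ the scalar part of $F_k^{ij}$ simplifies to
\[
M_i\,k_j(k_i+1)-M_j\,k_i(k_j+1).
\]

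Next I would analyze one such equation, say for the pair $(1,2)$, and then invoke symmetry for the rest. Writing $S=|\kappa|^2=k_1^2+k_2^2+k_3^2+k_4^2$ gives $1+|\tau_i\kappa|^2=S+2k_i+2$, so the vanishing condition becomes
\[
k_2(k_1+1)(S+2k_2+2)=k_1(k_2+1)(S+2k_1+2).
\]
A straightforward expansion factors this as
\[
(k_2-k_1)\bigl[S+2(k_1+1)(k_2+1)\bigr]=0.
\]
The second bracket equals $(k_1+k_2+1)^2+1+k_3^2+k_4^2$, which is strictly positive for any integers $k_i$, so we must have $k_1=k_2$. Running the same argument for the pairs $(1,3)$ and $(1,4)$ forces $k_1=k_3$ and $k_1=k_4$; this yields the necessity direction $k_1=k_2=k_3=k_4$.

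For the converse, if $k_1=k_2=k_3=k_4$ then $|\tau_i\kappa|^2$ is independent of $i$, hence $M_1=M_2=M_3=M_4$, and each expression $M_i\,k_j(k_i+1)-M_j\,k_i(k_j+1)$ reduces to $M_i[k_j(k_i+1)-k_i(k_j+1)]=M_i(k_j-k_i)=0$. Thus all six scalar parts vanish and $F$ is $su(2)$-valued by Proposition 4.3.

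The only nontrivial step is the factorization above together with the observation that $(k_1+k_2+1)^2+1+k_3^2+k_4^2$ never vanishes on $\mathbb{Z}^4$; this is where integrality of the indices is essential. Everything else is bookkeeping: identifying the scalar parts of the six listed $F_k^{ij}$ and applying the pair $(1,2)$ argument with the appropriate permutation of indices to the remaining pairs.
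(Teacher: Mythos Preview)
Your proof is correct and follows essentially the same route as the paper's own argument: reduce to the vanishing of the real parts $M_i\,k_j(k_i+1)-M_j\,k_i(k_j+1)$ of the six $F_k^{ij}$ (this is the content of Proposition~4.4 in the paper's numbering, not 4.3 as you wrote), and then conclude $k_i=k_j$ for each pair. The paper simply asserts that the conclusion ``follows immediately'' from these six equations, whereas you actually carry out the algebra---clearing denominators, factoring to obtain $(k_j-k_i)\bigl[S+2(k_i+1)(k_j+1)\bigr]=0$, and rewriting the bracket as $(k_i+k_j+1)^2+1+k_r^2+k_s^2>0$---so your write-up is more complete. One small remark: integrality is not actually needed for the positivity of that bracket, since $(k_i+k_j+1)^2+1+k_r^2+k_s^2\ge 1$ for all real $k_i$.
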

\begin{proof} From Proposition~4.4 $F$ is $su(2)$-valued if and only if
\begin{equation*}
M_i(k_ik_j+k_j)-M_j(k_ik_j+k_i)=0
\end{equation*}
for any $k_i\in\Bbb{Z}$, \ $i,j=1,2,3,4$ and $i<j$.  It follows
immediately (\ref{5.4}).
\end{proof}

Thus, the $su(2)$-valued discrete curvature 2-form $F$ can be
written in the quaternionic form as follows
\begin{equation}\label{5.5}
F=\sum_{k, \
k_i=\mu}M_\mu(2-2\mu)\{(\varepsilon_{12}^k-\varepsilon_{34}^k)\mathbf{i}
+(\varepsilon_{13}^k+\varepsilon_{24}^k)\mathbf{j}+(\varepsilon_{14}^k-\varepsilon_{23}^k)\mathbf{k}\}.
\end{equation}
 From (\ref{5.2}) here we have
 $M_\mu=\frac{1}{2(1+4\mu^2)(1+\mu+2\mu^2)}$. Since $k_i=\mu$,
 in (\ref{5.5}) we can write $\varepsilon_{ij}^{\mu}$  instead of
 $\varepsilon_{ij}^k$.

If we consider the 0-form
\begin{equation}\label{5.6}
\omega=\sum_{\mu}M_\mu(1-\mu)x^\mu, \qquad \mu\in\Bbb{Z}
\end{equation}
and use the following relation (see the proof of Proposition~4.5)
\begin{equation*}
\bar e\cup
e=2\{(\varepsilon_{12}-\varepsilon_{34})\mathbf{i}+(\varepsilon_{13}+\varepsilon_{24})\mathbf{j}+
(\varepsilon_{14}-\varepsilon_{23})\mathbf{k}\},
\end{equation*}
 then
$F$ can be written as
\begin{equation}\label{5.7}
F=\omega\cup \bar{e}\cup e.
\end{equation}
In view of Corollary~4.6 $F$ is anti-self-dual, i.e., $F=-\tilde\iota\ast
F$. Thus under condition (\ref{5.4})  $A$ with components (\ref{5.1})
describes an anti-instanton.

In the same manner we can see that the following quaternionic 1-form
\begin{equation}\label{5.8}
A=\IM(f\cup\bar{e}),
\end{equation}
where $f$ has the components
\begin{equation}\label{5.9}
f_k=\frac{\kappa}{1+|\kappa|^2},
\end{equation}
leads to an instanton solution of (\ref{4.24}).
Indeed, substituting (\ref{5.8}) and (\ref{5.9}) in (\ref{4.16}) we now obtain

\begin{align*}
F_k^{12}&=\{-M_1(1+k_2^2-k_1^2-k_1)-M_2(1+k_1^2-k_2^2-k_2)\}\mathbf{i}\\
&+\{M_1(k_4k_1-k_2k_3)+M_2(k_3k_2-k_4k_1)\}\mathbf{j}\\
&+\{M_1(-k_2k_4-k_1k_3)+M_2(k_1k_3+k_2k_4)\}\mathbf{k}\\
&+M_1(k_1k_2+k_2)-M_2(k_1k_2+k_1),
\end{align*}
\begin{align*}
F_k^{13}&=\{M_1(-k_2k_3-k_1k_4)+M_3(k_1k_4+k_2k_3)\}\mathbf{i}\\
&-\{M_1(1+k_3^2-k_1^2-k_1)+M_3(1+k_1^2-k_3^2-k_3)\}\mathbf{j}\\
&+\{M_1(k_1k_2-k_3k_4)+M_3(k_3k_4-k_1k_2)\}\mathbf{k}\\
&+M_1(k_1k_3+k_3)-M_3(k_1k_3+k_1),
\end{align*}
\begin{align*}
F_k^{14}&=\{M_1(k_1k_3-k_2k_4)+M_4(k_2k_4-k_1k_3)\}\mathbf{i}\\
&+\{M_1(-k_3k_4-k_1k_2)+M_4(k_1k_2+k_3k_4)\}\mathbf{j}\\
&-\{M_1(1+k_4^2-k_1^2-k_1)+M_4(1+k_1^2-k_4^2-k_4)\}\mathbf{k}\\
&+M_1(k_1k_4+k_4)-M_4(k_1k_4+k_1),
\end{align*}
\begin{align*}
F_k^{23}&=\{M_2(-k_2k_4+k_1k_3)+M_3(-k_1k_3+k_2k_4)\}\mathbf{i}\\
&+\{M_2(k_3k_4+k_1k_2)-M_3(k_1k_2+k_3k_4)\}\mathbf{j}\\
&-\{M_2(1+k_3^2-k_2^2-k_2)+M_3(1+k_2^2-k_3^2-k_3)\}\mathbf{k}\\
&+M_2(k_2k_3+k_3)-M_3(k_2k_3+k_2),
\end{align*}
\begin{align*}
F_k^{24}&=\{M_2(k_2k_3+k_4k_1)-M_4(k_1k_4+k_2k_3)\}\mathbf{i}\\
&+\{M_2(1+k_4^2-k_2^2-k_2)+M_4(1+k_2^2-k_4^2-k_4)\}\mathbf{j}\\
&+\{M_2(k_1k_2-k_3k_4)+M_4(k_3k_4-k_1k_2)\}\mathbf{k}\\
&+M_2(k_2k_4+k_4)-M_4(k_2k_4+k_2),
\end{align*}
\begin{align*}
F_k^{34}&=-\{M_3(1+k_4^2-k_3^2-k_3)+M_4(1+k_3^2-k_4^2-k_4)\}\mathbf{i}\\
&+\{M_3(-k_2k_3+k_1k_4)+M_4(-k_1k_4+k_2k_3)\}\mathbf{j}\\
&+\{M_3(k_2k_4+k_1k_3)-M_4(k_1k_3+k_2k_4)\}\mathbf{k}\\
&+M_3(k_3k_4+k_4)-M_4(k_3k_4+k_3).
\end{align*}
Again,  under condition (\ref{5.4}) we can write $F$ as
\begin{equation*}
F=\sum_{\mu}M_\mu(2\mu-2)\{(\varepsilon_{12}^\mu+\varepsilon_{34}^\mu)\mathbf{i}
+(\varepsilon_{13}^\mu-\varepsilon_{24}^\mu)\mathbf{j}+(\varepsilon_{14}^\mu+\varepsilon_{23}^\mu)\mathbf{k}\},
\end{equation*}
where  $\mu\in\Bbb{Z}$.  Therefore
\begin{equation}\label{5.10}
F=\omega\cup e\cup\bar{e},
\end{equation}
where  $\omega$ is given by (\ref{5.6}). Thus the discrete curvature form (\ref{5.10}) is self-dual and we can say that (\ref{5.8}) describes an instanton.

Now to complete the analogy with the continual case we describe more
precisely how the anti-instanton given by (\ref{5.1}) behaves as
$|\kappa|\rightarrow\infty$. It is clear that $f_k$ is
asymptotically $\frac{\bar{\kappa}}{|\kappa|^2}=\kappa^{-1}$. Then
 \begin{equation}\label{5.11}
  A \sim \IM(x^{-1}\cup d^cx) \quad \mbox{as} \quad |\kappa|\rightarrow\infty.
\end{equation}
Here $x$ is given by (\ref{4.10}). By virtue of Proposition~4.3 the
discrete curvature  $F=0$ at infinity.
\begin{prop} The anti-instanton  (\ref{5.1}) has the same form at $\infty$ as it
has near 0.
\end{prop}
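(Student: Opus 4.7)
The plan is to exploit the invariance of the coefficient $f_k=\bar{\kappa}/(1+|\kappa|^2)$ under the quaternionic inversion $\kappa\mapsto\bar\kappa^{-1}=\kappa/|\kappa|^2$, the discrete counterpart of the conformal map $x\mapsto 1/\bar x$ that swaps $0$ and $\infty$ in the continual BPST picture. Since (5.11) already pins down the form of $A$ at infinity, such a symmetry is exactly the right tool to transfer that description to a neighborhood of the origin.

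First I would verify the invariance by direct substitution: setting $\kappa'=\kappa/|\kappa|^2$ gives $|\kappa'|^2=|\kappa|^{-2}$ and $\overline{\kappa'}=\bar\kappa/|\kappa|^2$, so
\[
\frac{\overline{\kappa'}}{1+|\kappa'|^2}
=\frac{\bar\kappa/|\kappa|^2}{1+|\kappa|^{-2}}
=\frac{\bar\kappa}{1+|\kappa|^2},
\]
i.e.\ $f_{\kappa'}=f_\kappa$. Equivalently, using the identity $\bar\kappa=|\kappa|^2\kappa^{-1}$ (valid for $\kappa\neq 0$) one rewrites
\[
f_k=\frac{|\kappa|^2}{1+|\kappa|^2}\,\kappa^{-1},
\]
which exhibits $f_k$ as a scalar multiple of $\kappa^{-1}$ whose prefactor is symmetric in $|\kappa|\leftrightarrow |\kappa|^{-1}$. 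The prefactor tends to $1$ as $|\kappa|\to\infty$, recovering (5.11), while under the inversion the same algebraic expression governs the regime $|\kappa|\to 0$.

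Next I would combine this with Proposition~4.3 to conclude. The asymptotic form at infinity is $A\sim\IM(x^{-1}\cup d^c x)$, whose curvature vanishes by Proposition~4.3. Invoking the inversion symmetry $f_{\kappa'}=f_\kappa$, the connection $A=\IM(f\cup d^c x)$ near $\kappa=0$ exhibits the identical algebraic structure $\IM(x^{-1}\cup d^c x)$ (up to the symmetric scalar factor just identified), so "the same form" is realized near the origin as was established at infinity.

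The main obstacle I anticipate is interpretive rather than computational: in the discrete setting the quaternionic inversion $\kappa\mapsto\bar\kappa^{-1}$ does not preserve the integer lattice $\mathbb{Z}^4$, so it cannot be realized as a genuine bijection on the chain indices of $C(4)$. The symmetry must therefore be read at the level of the formal expression for $f$, where the two displayed computations above make sense literally; once this convention is adopted, the proof reduces to these two short checks together with an appeal to Proposition~4.3.
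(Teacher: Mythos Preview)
Your proposal has a genuine gap: it misses the gauge transformation, which is the mechanism that actually makes the statement true. The assertion ``the anti-instanton has the same form at $\infty$ as near $0$'' is not a statement about the coefficient $f_k$ alone; it means that after applying a suitable gauge transformation and passing to the inverted variable $y=x^{-1}$, the connection near $y=0$ has the same functional shape as the original connection near $x=0$. Your symmetry $f_{\kappa'}=f_\kappa$ is a correct and relevant algebraic identity (it is, in fact, what makes the final step of the paper's computation work), but by itself it says nothing about $A=\IM(f\cup d^cx)$, because you never address how the factor $d^cx$ transforms. In particular your conclusion that ``near $\kappa=0$ the connection exhibits the structure $\IM(x^{-1}\cup d^cx)$ up to the symmetric scalar factor'' is vacuous: that scalar factor $|\kappa|^2/(1+|\kappa|^2)$ tends to $0$ at the origin, so you are only saying $A\to 0$. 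The honest small-$\kappa$ asymptotic is $A\sim\IM(\bar x\cup d^cx)$, which is \emph{not} the pure-gauge form, so no direct comparison with (5.11) is available without further work.

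What the paper does is gauge transform $f\cup e$ by $g=x^{-1}$ and compute explicitly
\[
\IM\bigl(g^{-1}\cup(f\cup e)\cup g+g^{-1}\cup d^cg\bigr)
=\IM\Bigl(\Bigl(\sum_k\frac{\bar y_k}{1+|y_k|^2}\,x^k\Bigr)\cup d^cy\Bigr),
\]
using $x\cup f\cup e=d^cx-\bigl(\sum_k(1+|\kappa|^2)^{-1}x^k\bigr)\cup d^cx$ together with Proposition~4.1 to convert $d^cx\cup x^{-1}$ into $-x\cup d^cx^{-1}=-x\cup d^cy$. The right-hand side is precisely the original anti-instanton with $x$ replaced by $y$; this is where your identity $f_{\kappa'}=f_\kappa$ reappears, but only after the gauge transformation has supplied the $d^cy$. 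Your worry about the inversion not preserving $\mathbb Z^4$ is also resolved this way: the paper never reindexes the lattice, it simply works with the discrete $0$-form $y=x^{-1}$ on the same complex.
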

\begin{proof} Introduce the quaternionic 0-form
\begin{equation*}
y=\sum_ky_kx^k,  \qquad \mbox{where} \qquad  y_k=\frac{1}{\kappa}
\end{equation*}
 and remind $\kappa=k_{1}+k_{2}\mathbf{i}+k_{3}\mathbf{j}+k_{4}\mathbf{k}$.
 Clearly, $y=x^{-1}$. We first compute  $x\cup f\cup e\cup x^{-1}$, where  $f$ is given by (\ref{5.1}). To do this, take  (\ref{4.10}), (\ref{4.11}) and use the $\cup$-product definition. We have
 \begin{align*}
 x\cup f\cup e&=\Big(\sum_k\kappa x^k\Big)\cup \Big(\sum_k\frac{\bar{\kappa}}{1+|\kappa|^2}x^k\Big)\cup e\\
 &=\Big(\sum_k\frac{|\kappa|^2}{1+|\kappa|^2}x^k\Big)\cup e=e-\Big(\sum_k\frac{1}{1+|\kappa|^2}x^k\Big)\cup e\\
 &=d^cx-\Big(\sum_k\frac{1}{1+|\kappa|^2}x^k\Big)\cup d^cx.
 \end{align*}
 From this by (\ref{4.5})  we get
 \begin{equation}\label{5.12}
 x\cup f\cup e\cup x^{-1}=-x\cup d^cx^{-1}+\Big(\sum_k\frac{\kappa}{1+|\kappa|^2}x^k\Big)\cup d^cx^{-1}.
 \end{equation}
 Now gauge transform the form $f\cup e$ by the gauge transformation $g=x^{-1}$. We must take the imaginary part of (\ref{4.13}). This yields by (\ref{5.12})
 \begin{align*}
 \IM(g^{-1}\cup f\cup e\cup g+g^{-1}\cup d^cg)&=\IM\Big(\Big(\sum_k\frac{\kappa}{1+|\kappa|^2}x^k\Big)\cup d^cx^{-1}\Big)\\
 &=\IM\Big(\Big(\sum_k\frac{\bar{y}_k}{1+|y_k|^2}x^k\Big)\cup d^cy\Big).
 \end{align*}
 Hence the gauge transformed anti-instanton $A$  has  precisely the  form (\ref{5.11}) near $y=0$.
\end{proof}
 The same conclusion can be drawn for the instanton (\ref{5.8}).

In the continual theory Proposition~5.2 shows that the
anti-instanton (or instanton) extends to the 4-sphere $S^4$. This
follows from the fact that $S^4$ can be obtained from $\Bbb{R}^4$ by
adding the point at infinity, i.e., $S^4\simeq\Bbb{R}^4\cup
\{\infty\}$. To obtain the same result for our discrete model we
need to construct a suitable combinatorial analog of the 4-sphere.
It would be interesting to connect the above constructions with
discrete model of $S^4$ described in \cite{S3}. This connection must
be investigated and we hope to treat its further in future work.

\end{document}